\documentclass[letterpaper, 10 pt, conference]{ieeeconf}
\IEEEoverridecommandlockouts
\overrideIEEEmargins
\usepackage{cite}
\usepackage{amsmath,amssymb,amsfonts}
\usepackage{algorithmic}
\usepackage{graphicx}
\usepackage{textcomp}
\usepackage{xcolor}
\usepackage{mathrsfs}
\usepackage{psfrag}
\usepackage{cancel}
\usepackage{verbatim}

\usepackage{circuitikz}
\usetikzlibrary{decorations.pathreplacing,calligraphy,patterns}

\makeatletter
\let\NAT@parse\undefined
\makeatother
\usepackage{hyperref}
\usepackage{theoremref}

\def\mc{\mathcal}
\def\mb{\mathbb}

\begin{document}
\sloppy
\title{\LARGE \bf Hybrid integrator-gain system based integral resonant controllers for negative imaginary systems}
\author{Kanghong Shi,$\quad$Ian R. Petersen, \IEEEmembership{Life Fellow, IEEE} 
\thanks{This work was supported by the Australian Research Council under grant DP230102443.}
\thanks{K. Shi and I. R. Petersen are with the School of Engineering, College of Engineering, Computing and Cybernetics, Australian National University, Canberra, Acton, ACT 2601, Australia. K. Shi is also with the Australian Centre for Robotics, University of Sydney, NSW 2006, Australia.
        {\tt kanghong@ieee.org}, {\tt ian.petersen@anu.edu.au}.}%
}

\newtheorem{definition}{Definition}
\newtheorem{theorem}{Theorem}
\newtheorem{conjecture}{Conjecture}
\newtheorem{lemma}{Lemma}
\newtheorem{remark}{Remark}
\newtheorem{corollary}{Corollary}
\newtheorem{assumption}{Assumption}

\maketitle
\thispagestyle{plain}
\pagestyle{plain}

\begin{abstract}
We introduce a hybrid control system called a hybrid integrator-gain system (HIGS) based integral resonant controller (IRC) to stabilize negative imaginary (NI) systems. A HIGS-based IRC has a similar structure to an IRC, with the integrator replaced by a HIGS. We show that a HIGS-based IRC is an NI system. Also, for a SISO NI system with a minimal realization, we show there exists a HIGS-based IRC such that their closed-loop interconnection is asymptotically stable. Also, we propose a proportional-integral-double-integral resonant controller ($\text{PII}^2\text{RC}$) and a HIGS-based $\text{PII}^2\text{RC}$, and we show that both of them can be applied to asymptotically stabilize an NI system. 
An example is provided to illustrate the proposed results.
\end{abstract}

\begin{keywords}
hybrid integrator-gain system, integral resonant control (IRC), negative imaginary (NI) system, stability, robust control.
\end{keywords}

\section{INTRODUCTION}
Negative imaginary (NI) systems theory was introduced in \cite{lanzon2008stability,petersen2010feedback} to address the robust control problem for flexible structures \cite{preumont2018vibration,halim2001spatial,pota2002resonant}, which usually have highly resonant dynamics. Roughly speaking, a square, real-rational and proper transfer function $G(s)$ is said to be NI if it has no strict right-half plane poles and its frequency response $G(j\omega)$ satisfies $j\left[G(j\omega)-G(j\omega)^*\right]\geq 0$ for all $\omega>0$ \cite{lanzon2008stability}. Typical examples of NI systems are mechanical systems with colocated force actuators and position sensors. NI systems theory provides an alternative approach to the passivity theory \cite{brogliato2007dissipative} when velocity measurements are unavailable. In comparison to passivity theory, which can only deal with systems with a relative degree of zero or one, an advantage of NI systems theory is that it allows systems to have relative degree zero, one, and two \cite{shi2024necessary}. An NI system can be stabilized using a strictly negative imaginary (SNI) controller. Under some assumptions, the positive feedback interconnection of an NI system $G(s)$ and an SNI system $R(s)$ is asymptotically stable if and only if the DC loop gain of the interconnection is strictly less than unity; i.e., $\lambda_{max}(G(0)R(0))<1$ (e.g., see \cite{lanzon2017feedback}). NI systems theory has found its applications in many fields including nano-positioning \cite{mabrok2013spectral,das2014mimo,das2014resonant,das2015multivariable}, control of lightly damped structures \cite{cai2010stability,rahman2015design,bhikkaji2011negative}, and control of power systems \cite{chen2023nonlinear}.

NI systems theory was extended to nonlinear systems in \cite{ghallab2018extending,shi2021robust,shi2023output}. A nonlinear system is said to be NI if the system is dissipative with respect to the inner product of the system input and the time derivative of the system output. Under some assumptions, a nonlinear NI system can be asymptotically stabilized using a nonlinear output strictly negative imaginary (OSNI) system. Such a nonlinear extension of NI systems theory not only makes NI systems theory applicable to a broader class of plants, but also allows the use of more advanced controllers.

One such controller is the hybrid integrator-gain system (HIGS). HIGS elements were introduced in \cite{deenen2017hybrid} to overcome the inherent limitations of linear control systems (see e.g., \cite{middleton1991trade}). A HIGS switches between an integrator mode and a gain mode in order to have a sector-bounded input-output relationship. Compared with an integrator, which has a phase lag of $90^\circ$, a HIGS has a similar magnitude slope but only a $38.1^\circ$ phase lag. This $52.9^\circ$ phase reduction can significantly reduce time delay, and as a consequence, the overshoot; see \cite{van2020hybrid} for a concrete example; see also \cite{van2021overcoming,heertjes2023overview,deenen2021projection,van2023small}.
It is shown in \cite{shi2022negative} that a HIGS element is a nonlinear NI system, and can be applied as a controller to asymptotically stabilize an NI plant. The paper \cite{shi2023MEMS} proposes a control methodology for multi-input multi-output (MIMO) NI systems using multi-HIGS controllers. Also, \cite{shi2023MEMS} reports the results of a hardware experiment where a multi-HIGS is applied to improve the performance of a micro-electromechanical system (MEMS) nanopositioner. The HIGS-based control methodologies proposed in \cite{shi2022negative,shi2023MEMS} also motivated an NI systems theory for systems with switching \cite{shi2023nonlinear}. In addition, based on the discrete-time NI systems theory \cite{shi2023discrete}, a digital control approach is proposed for NI systems \cite{shi2024digital}, where discrete-time HIGS are used as controllers.
Considering the effectiveness of HIGS-based control and the advantages of HIGS elements over linear integral controllers, we may naturally ask the question: rather than using a HIGS as a standalone controller, can we replace integrators with HIGS elements in more intricate controllers to enhance control performance? In this paper, we investigate this problem for a HIGS-based integral resonant controller .

IRC was introduced in \cite{aphale2007integral} to provide damping control for flexible structures. For a system with transfer matrix $G(s)$, an IRC is implemented by first adding a direct feedthrough $D$ to the system $G(s)$ and then applying an integrator in positive feedback to $G(s)+D$. Adding such a feedthrough $D$ changes the pole-zero interlacing of $G(s)$ into a zero-pole interlacing in $G(s)+D$. It is shown in \cite{petersen2010feedback,bhikkaji2008multivariable} that an IRC is an SNI system and can stabilize systems with the NI property. Since IRC are effective in damping control and easy to implement, they have been widely applied in the control of NI systems; e.g., see \cite{yue2015integral,bhikkaji2008integral,russell2017evaluating}.

In this paper, we propose a HIGS-based IRC by replacing the integrator in an IRC by a HIGS element. The advantages of a HIGS-based IRC are two-fold: 1) it utilizes the advantages of a HIGS in terms of small phase lag, reduced time delay and reduced overshoot; 2) A HIGS has two parameters -- the integrator frequency and gain value, while an integrator only has one parameter $\Gamma$. Hence, a greater degree of freedom in parameters is allowed in controller design using a HIGS-based IRC. We provide a state-space model of a HIGS-based IRC. We show that a HIGS-based IRC has the nonlinear NI property. We also show that given a SISO NI system with minimal realization, there exists a HIGS-based IRC such that their closed-loop interconnection is asymptotically stable. This is illustrated using an example.

We also investigate proportional-integral-double-integral resonant controllers ($\text{PII}^2\text{RC}$) in this paper. A $\text{PII}^2\text{RC}$ is implemented by replacing the integrator in an IRC by a proportional-integral-double-integral controller with the transfer function $C(s)=k_p+k_1/s+k_2/s^2$. We show that a $\text{PII}^2\text{RC}$ is an SNI system and can asymptotically stabilize an NI plant. Then, by replacing the integrators in an $\text{PII}^2\text{RC}$ with HIGS elements, we construct a HIGS-based $\text{PII}^2\text{RC}$. We show that a HIGS-based $\text{PII}^2\text{RC}$ can also provide asymptotic stabilization for NI plants.

The rest of the paper is organized as follows: Section \ref{sec:pre} provides some preliminary results on negative imaginary systems theory, IRC and HIGS. Section \ref{sec:HIGS IRC} provides a model for the HIGS-based IRC and shows that it can be used in the control of NI systems. Section \ref{sec:PII2RC} introduces a $\text{PII}^2\text{RC}$ and also gives a stability proof for the interconnection of an NI system a $\text{PII}^2\text{RC}$. Section \ref{sec:HIGS PII2RC} introduces the HIGS-based $\text{PII}^2\text{RC}$ and shows that it can be applied in the stabilization of NI systems. In Section \ref{sec:example}, we illustrate the main results in this paper that are given in Section \ref{sec:HIGS IRC} on a mass-spring system example. The paper is concluded in Section \ref{sec:conclusion}.

Notation: The notation in this paper is standard. $\mathbb R$ denotes the field of real numbers. $\mathbb R^{m\times n}$ denotes the space of real matrices of dimension $m\times n$. $A^T$ denotes the transpose of a matrix $A$.  $A^{-T}$ denotes the transpose of the inverse of $A$; that is, $A^{-T}=(A^{-1})^T=(A^T)^{-1}$. $\lambda_{max}(A)$ denotes the largest eigenvalue of a matrix $A$ with real spectrum. $\|\cdot\|$ denotes the standard Euclidean norm. For a real symmetric or complex Hermitian matrix $P$, $P>0\ (P\geq 0)$ denotes the positive (semi-)definiteness of a matrix $P$ and $P<0\ (P\leq 0)$ denotes the negative (semi-)definiteness of a matrix $P$. A function $V: \mb R^n \to \mb R$ is said to be positive definite if $V(0)=0$ and $V(x)>0$ for all $x\neq 0$.

\section{PRELIMINARIES}\label{sec:pre}
\subsection{Negative imaginary systems}
We consider systems of the form
\begin{subequations}\label{eq:general nonlinear system}
\begin{align}
	\dot{x} =&\ f(x,u),\\
	y =&\ h(x),
\end{align}
\end{subequations}
where $x\in \mathbb R^n$, $u,y\in \mathbb R^p$ are the state, input and output of the system, respectively. Here, $f:\mathbb R^n\times \mathbb R^p \to \mathbb R^n$ is a Lipschitz continuous function and $h:\mathbb R^n \to \mathbb R^p$ is a continuously differentiable function. We assume $f(0,0)=0$ and $h(0)=0$.
\begin{definition}[NI systems]\cite{ghallab2018extending,shi2021robust}\label{def:nonlinear NI}
	A system of the form (\ref{eq:general nonlinear system}) is said to be a negative imaginary (NI) system if there exists a positive definite continuously differentiable storage function $V:\mathbb{R}^n \to \mathbb{R}$ such that for any locally integrable input $u$ and solution $x$ to (\ref{eq:general nonlinear system}),
	\begin{equation*}\label{eq:Lyapunov_dissipative_nonlinear_NI}
		\dot{V}(x(t)) \leq u(t)^T\dot{y}(t), \quad \forall\, t \geq 0.
	\end{equation*}
\end{definition}
We provide the following conditions for a linear system to be NI. This condition is referred to as the NI lemma (see \cite{xiong2010negative}).
\begin{lemma}[NI lemma]\cite{xiong2010negative}\label{lemma:NI}
Let $(A,B,C,D)$ be a minimal state-space realization of an $p\times p$ real-rational proper transfer function matrix $G(s)$ where $A\in \mathbb R^{n\times n}$, $B\in \mathbb R^{n\times p}$, $C\in \mathbb R^{p\times n}$, $D\in \mathbb R^{p\times p}$. Then $G(s)$ is NI if and only if:

1. $\det A \neq 0$, $D=D^T$;

2. There exists a matrix $Y=Y^T>0$, $Y\in \mathbb R^{n\times n}$ such that
\begin{equation}\label{eq:NI lemma}
	AY+YA^T\leq 0, \textnormal{ and }\  B+AYC^T=0.
\end{equation}	
\end{lemma}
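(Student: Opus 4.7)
The plan is to prove both directions of this characterization by reducing the NI property of $G(s)$ to the positive-real (PR) property of an auxiliary transfer function and then invoking the standard Positive-Real Lemma (KYP). The key reformulation is
\[
M(s) \,\triangleq\, -C(sI-A)^{-1}A^{-1}B,
\]
which is well-defined once $\det A\ne 0$. Using the identity $(sI-A)^{-1}+A^{-1}=sA^{-1}(sI-A)^{-1}$, one verifies $G(s)-G(0)=-sM(s)$, which on $s=j\omega$ gives
\[
M(j\omega)+M(j\omega)^* = \tfrac{j}{\omega}\bigl[G(j\omega)-G(j\omega)^*\bigr] - \tfrac{j}{\omega}\bigl[G(0)-G(0)^T\bigr].
\]
Thus on $\omega>0$, the PR property of $M$ is equivalent to the NI property of $G$ together with $G(0)=G(0)^T$. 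This equivalence drives both directions.

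For sufficiency, assume conditions 1 and 2. The inequality $AY+YA^T\le 0$ with $Y=Y^T>0$ places every eigenvalue of $A$ in the closed left half-plane; combined with $\det A\ne 0$ this excludes poles in the closed RHP. For the frequency-domain inequality I substitute $B=-AYC^T$ into $G(s)=C(sI-A)^{-1}B+D$ and use $(sI-A)^{-1}A = -I+s(sI-A)^{-1}$ to rewrite
\[
G(s) = D + CYC^T - sC(sI-A)^{-1}YC^T.
\]
Exploiting $D=D^T$ and $Y=Y^T$, this gives $j[G(j\omega)-G(j\omega)^*] = \omega\bigl[C(j\omega I-A)^{-1}YC^T + CY(-j\omega I-A^T)^{-1}C^T\bigr]$, and the bracketed Hermitian expression is shown to be PSD by sandwiching $AY+YA^T\le 0$ between $(j\omega I-A)^{-1}$ on the left and $(-j\omega I-A^T)^{-1}$ on the right and applying the identity for $(sI-A)^{-1}A$ twice; what remains collapses to exactly that bracket. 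Hence $j[G(j\omega)-G(j\omega)^*]\ge 0$ for $\omega>0$.

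For necessity, assume $G(s)$ is NI. Taking $\omega\to\infty$ in the NI inequality together with properness yields $j(D-D^T)\ge 0$; since $D-D^T$ is real and skew-symmetric, its nonzero spectrum consists of $\pm$-pairs on the imaginary axis, forcing $D=D^T$. The analogous $\omega\to 0^+$ limit gives $G(0)=G(0)^T$, and $\det A\ne 0$ follows from minimality together with the convention that poles at the origin are disallowed. The equivalence above then implies $M(s)$ is PR on the $j\omega$-axis, and since the poles of $M$ lie among the eigenvalues of $A$ in the closed LHP, $M$ is analytic on the open RHP. I would then apply the Positive-Real Lemma to the realization $(A,\,A^{-1}B,\,-C,\,0)$ of $M$ -- which inherits minimality from $(A,B,C,D)$ because pre-multiplying $B$ by the invertible matrix $A^{-1}$ preserves controllability -- to obtain $P=P^T>0$ with $A^TP+PA\le 0$ and $PA^{-1}B=-C^T$. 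Setting $Y=P^{-1}>0$ and pre- and post-multiplying the LMI by $Y$ converts these to $AY+YA^T\le 0$ and $B+AYC^T=0$, as required.

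The principal obstacle is the necessity step: making rigorous the passage from a pointwise $j\omega$-axis inequality to a Positive-Real Lemma application. In particular, the standard PR Lemma is cleanest when the state matrix is Hurwitz, whereas here $A$ may have eigenvalues on the imaginary axis away from the origin, so one needs either the generalized PR Lemma that permits $j\omega$-axis poles, or a perturbation argument that displaces $A$ slightly into the open LHP and passes to the limit in $Y$. Once this analytic bookkeeping is handled, the remaining algebra -- both the identity $G(s)-G(0)=-sM(s)$ and the transformation $P\mapsto Y=P^{-1}$ -- is essentially mechanical.
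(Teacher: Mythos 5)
The paper offers no proof of this lemma; it is quoted directly from \cite{xiong2010negative}, so the only meaningful comparison is with the argument in that reference. Your route---reducing the NI property of $G$ to positive realness of the auxiliary transfer function $M(s)=-C(sI-A)^{-1}A^{-1}B$ via $G(s)-G(0)=-sM(s)$, and then invoking the positive-real (KYP) lemma---is essentially the standard proof of this result. The algebra is sound: the identity $(sI-A)^{-1}+A^{-1}=sA^{-1}(sI-A)^{-1}$ is correct, the sufficiency computation $j[G(j\omega)-G(j\omega)^*]=\omega\,C(j\omega I-A)^{-1}\bigl[-(AY+YA^T)\bigr](-j\omega I-A^T)^{-1}C^T\ge 0$ works, the $\omega\to\infty$ and $\omega\to 0^+$ limits do force $D=D^T$ and $G(0)=G(0)^T$, minimality of $(A,A^{-1}B,-C,0)$ is inherited as you say, and the substitution $Y=P^{-1}$ converts the PR-lemma conditions into $AY+YA^T\le 0$ and $B+AYC^T=0$ correctly.

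The genuine gap is the one you flag yourself, and it is not merely ``analytic bookkeeping.'' The quantity $M(j\omega)+M(j\omega)^*$ is defined only away from the nonzero imaginary-axis eigenvalues of $A$, and a pointwise inequality on that punctured axis does not by itself make $M$ positive real: you must additionally show that any imaginary-axis poles of $M$ are simple with positive semidefinite Hermitian residue matrices before the extension to the open right half-plane (and hence the generalized, non-Hurwitz PR lemma) is available. Those residue conditions are in fact part of the full definition of NI in \cite{xiong2010negative}, which the paper's ``roughly speaking'' description suppresses; without them the lemma is not an equivalence, and your sufficiency direction likewise only verifies the frequency-domain inequality and not the residue/semisimplicity conditions at imaginary-axis poles (these do follow from $AY+YA^T\le 0$, $Y>0$, but need to be checked). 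Until that bookkeeping and the statement of the generalized PR lemma (or the perturbation-and-limit argument) are written out, the necessity direction is a correct plan rather than a complete proof.
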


\subsection{Integral resonant control}
The implementation of an IRC is shown in Fig.~\ref{fig:CT_IRC}. Given a SISO plant with a transfer function $G(s)$, we apply a direct feedthrough $D$ and also an integral controller
\begin{equation}\label{eq:CT integrator}
	C(s)=\frac{\Gamma}{s}
\end{equation}
in positive feedback with $G(s)+D$. Here, we require the matrices $\Gamma, D\in \mb R$ to satisfy $D<0$ and $\Gamma>0$. The block diagram in Fig.~\ref{fig:CT_IRC} can be equivalently represented by the block diagram in Fig.~\ref{fig:CT_IRC_equivalence}, where $K(s)$ is
given as
\begin{equation}\label{eq:K(s)}
	K(s)=\frac{C(s)}{1-C(s)D}.
\end{equation}
Substituting (\ref{eq:CT integrator}) in (\ref{eq:K(s)}), we obtain the transfer function of the IRC:
\begin{equation}\label{eq:CT IRC}
	K(s)=\frac{\Gamma}{s-\Gamma D}.
\end{equation}
An IRC is an SNI system, and can be used in the control of NI plants (see \cite{bhikkaji2008multivariable,lanzon2008stability,petersen2010feedback}).

\begin{figure}[h!]
\centering
\psfrag{r}{$r$}
\psfrag{e_s}{$\overline e(s)$}
\psfrag{U_s}{$U(s)$}
\psfrag{Y_s}{$Y(s)$}
\psfrag{barY_s}{$\overline Y(s)$}
\psfrag{G_s}{$G(s)$}
\psfrag{C_s}{$C(s)$}
\psfrag{D}{$D$}
\includegraphics[width=8.5cm]{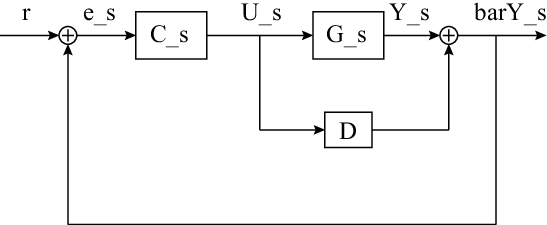}
\caption{Closed-loop interconnection of an integrator $C(s)=\frac{\Gamma}{s}$ and $G(s)+D$.}
\label{fig:CT_IRC}
\end{figure}

\begin{figure}[h!]
\centering
\psfrag{r}{$r$}
\psfrag{e_s}{$e(s)$}
\psfrag{U_s}{$U(s)$}
\psfrag{Y_s}{\hspace{0.5cm}$Y(s)$}
\psfrag{G_s}{$G(s)$}
\psfrag{K_s}{$K(s)$}
\includegraphics[width=8.5cm]{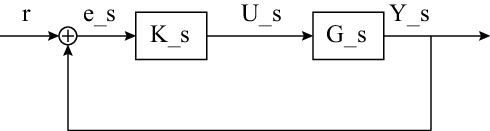}
\caption{Closed-loop interconnection of an IRC and a plant. This is equivalent to the closed-loop system in Fig.~\ref{fig:CT_IRC}.}
\label{fig:CT_IRC_equivalence}
\end{figure}

\subsection{Hybrid integrator-gain systems}
A SISO hybrid integrator-gain system (HIGS) $\mathcal{H}$ is represented by the following differential algebraic equations~\cite{deenen2017hybrid}:
	\begin{equation}\label{eq:HIGS}
		\mathcal{H}:
		\begin{cases}
			\dot{x}_h = \omega_h e, & \text{if}\, (e,u,\dot{e}) \in \mathcal{F}_1\\
			x_h = k_he, & \text{if}\, (e,u,\dot{e}) \in \mathcal{F}_2\\
			u = x_h,
		\end{cases}
	\end{equation}
where $x_h,e,u \in \mathbb{R}$ denote the state, input, and output of the HIGS, respectively. Here, $\dot{e}$ is the time derivative of the input $e$, which is assumed to be continuous and piecewise differentiable. Also, $\omega_h \in [0,\infty)$ and $k_h \in (0, \infty)$ represent the integrator frequency and gain value, respectively. These tunable parameters allow for desired control performance. The sets $\mathcal{F}_1$ and $\mathcal{F}_2 \in \mathbb{R}^3$ determine the HIGS modes of operation; i.e. the integrator and gain modes, respectively. The HIGS is designed to operate under the sector constraint $(e,u,\dot{e})\in \mc F$ (see \cite{deenen2017hybrid,Achten_HIGS_Skyhook_thesis_2020}) where
\begin{equation}\label{eq:F}
	\mathcal{F} = \{ (e,u,\dot{e}) \in \mathbb{R}^3 \mid eu \geq \frac{1}{k_h}u^2\},
\end{equation}
and $\mathcal{F}_1$ and $\mathcal{F}_2$ are defined as
\begin{align}
	\mathcal{F}_1& = \mathcal{F} \setminus \mathcal{F}_2;\notag\\
	\mathcal{F}_2& = \{(e,u,\dot{e}) \in \mathbb{R}^3 \mid u = k_he\text{ and } \omega_he^2 > k_he\dot{e}\}.\label{eq:F2}
\end{align}
A HIGS of the form (\ref{eq:HIGS}) is designed to primarily operate in the integrator mode unless the HIGS output $u$ is on the boundary of the sector $\mathcal{F}$, and tends to exit the sector; i.e. $(e,u,\dot{e}) \in \mathcal{F}_2$. In this case, the HIGS is enforced to operate in the gain mode. At the time instants when switching happens, the state $x_h$ still remains continuous, as can be seen from (\ref{eq:HIGS}).

\section{HIGS-based IRC for NI systems}\label{sec:HIGS IRC}
In this section, we provide the system model of a HIGS-based IRC. Also, we show that a HIGS-based IRC has the NI property and can be applied in the control of an NI plant.
\subsection{HIGS-based IRC}
Consider the structure of an IRC as shown in Fig.~\ref{fig:CT_IRC}. A HIGS-based IRC is constructed similarly but with the integrator $C(s)$ in Fig.~\ref{fig:CT_IRC} replaced by a HIGS of the form (\ref{eq:HIGS}). The implementation of a HIGS-based IRC is shown in Fig.~\ref{fig:HIGS-based IRC}.

\begin{figure}[h!]
\centering
\psfrag{r}{$r$}
\psfrag{e_s}{\hspace{0.2cm}$e$}
\psfrag{U_s}{\hspace{0.1cm}$u$}
\psfrag{Y_s}{\hspace{0.3cm}$y$}
\psfrag{barY_s}{\hspace{0.2cm}$\overline y$}
\psfrag{G_s}{$G(s)$}
\psfrag{C_s}{\hspace{-0.03cm}HIGS}
\psfrag{D}{\hspace{0.02cm}$D$}
\includegraphics[width=8.5cm]{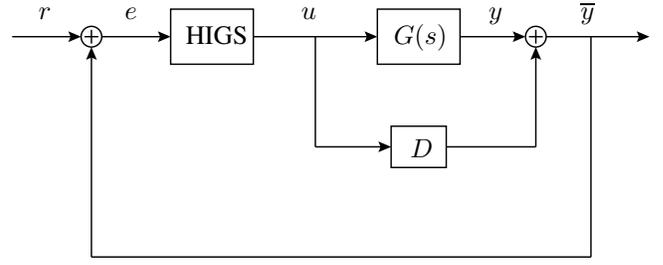}
\caption{Closed-loop interconnection of a HIGS $\mc H$ and $G(s)+D$.}
\label{fig:HIGS-based IRC}
\end{figure}

\begin{figure}[h!]
\centering
\psfrag{r}{$r$}
\psfrag{e_s}{\hspace{0.2cm}$\widetilde e$}
\psfrag{U_s}{\hspace{0.1cm}$u$}
\psfrag{Y_s}{\hspace{0.6cm}$y$}
\psfrag{G_s}{$G(s)$}
\psfrag{K_s}{\hspace{0.2cm}$\widetilde{\mc H}$}
\includegraphics[width=8.5cm]{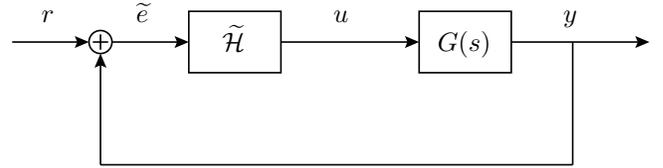}
\caption{Closed-loop interconnection of a HIGS-based IRC and a plant. It is equivalent to the closed-loop system in Fig.~\ref{fig:HIGS-based IRC}.}
\label{fig:HIGS-based IRC equivalence}
\end{figure}

We aim to derive the model of the HIGS-based IRC, which takes $r+y$ as input and gives an output $u$, as shown in Fig.~\ref{fig:HIGS-based IRC equivalence}. According to the settings in Fig.~\ref{fig:HIGS-based IRC} and Fig.~\ref{fig:HIGS-based IRC equivalence}, we have that
\begin{align}
	e=&\ r+y+Du;\notag\\
	\widetilde e=&\ r+y.
\end{align}
Therefore, we have 
\begin{equation}\label{eq:relation between e and tilde e}
	e = \widetilde e+Du=\widetilde e+Dx_h,
\end{equation}
where the second equality uses (\ref{eq:HIGS}). When the HIGS is in the integrator mode, we have that
\begin{equation*}
	\dot x_h=\omega_he=\omega_hDx_h+\omega_h\widetilde e.
\end{equation*}
When the HIGS is in the gain mode, we have that
\begin{equation*}
	x_h=k_he=k_hDx_h+k_h\widetilde e,
\end{equation*}
which implies
\begin{equation*}
	x_h=\frac{k_h}{1-k_hD}\widetilde e.
\end{equation*}
Also, we reformulate the sets $\mc F$, $\mc F_1$ and $\mc F_2$ to pose conditions on $(\widetilde e,u,\dot {\widetilde e})$ instead of $(e,u,\dot e)$.
Substituting (\ref{eq:relation between e and tilde e}) into (\ref{eq:F}) and (\ref{eq:F2}), we have that
\begin{equation*}
	\widetilde {\mc F} = \{ (\widetilde e,u,\dot{\widetilde e}) \in \mathbb{R}^3 \mid \widetilde eu \geq \frac{1-k_hD}{k_h}u^2\},
\end{equation*}
\begin{equation*}
	\widetilde {\mc F}_2 = \{(\widetilde e,u,\dot{\widetilde e}) \in \mathbb{R}^3 \mid u = \frac{k_h}{1-k_hD}\widetilde e \text{ and }  \omega_h \widetilde e^2 > k_h \widetilde e\dot{\widetilde e}\}.
\end{equation*}
To summarize, the system model of a HIGS-based IRC is given as follows:
	\begin{equation}\label{eq:HIGS-based IRC}
		\widetilde {\mathcal{H}}:
		\begin{cases}
			\dot{x}_h = \omega_hDx_h+\omega_h\widetilde e, & \text{if}\, (\widetilde e,u,\dot{\widetilde e}) \in \widetilde {\mathcal{F}}_1\\
			x_h = \widetilde \kappa \widetilde e, & \text{if}\, (\widetilde e,u,\dot{\widetilde e}) \in \widetilde {\mathcal{F}}_2\\
			u = x_h,
		\end{cases}
	\end{equation}
where $x_h,\widetilde e,u\in \mb R$ are the state, input and output of the HIGS-based IRC, respectively. The variable $\dot {\widetilde e}$ denotes the time derivative of the input $\widetilde e$ and is assumed to be continuous and piecewise differentiable. The constants $\omega_h\geq 0$, $D<0$ and $k_h>0$ are the system parameters. Also, we denote the new gain value by
\begin{equation}\label{eq:kappa}
	\widetilde \kappa:=\frac{k_h}{1-k_hD}
\end{equation}
since it will repeatedly occur in what follows. The HIGS-based IRC satisfies the sector constraint $(\widetilde e,u,\dot{\widetilde e})\in \widetilde {\mc F}$ where $\widetilde {\mc F}$, $\widetilde {\mc F}_1$ and $\widetilde {\mc F}_2$ are given as follows:
\begin{align}
	\widetilde {\mc F} =&\ \{ (\widetilde e,u,\dot{\widetilde e}) \in \mathbb{R}^3 \mid \widetilde eu \geq \frac{1}{\widetilde \kappa}u^2\},\label{eq:tilde F}\\
	\widetilde {\mc F}_1=&\ \widetilde {\mc F}\backslash \widetilde {\mc F}_2,\label{eq:tilde F1}\\
	\widetilde {\mc F}_2=&\ \{(\widetilde e,u,\dot{\widetilde e}) \in \mathbb{R}^3 \mid u = \widetilde \kappa\widetilde e \text{ and }  \omega_h \widetilde e^2 > k_h \widetilde e\dot{\widetilde e}\}.\label{eq:tilde F2}
\end{align}

\subsection{NI property of HIGS-based IRC}
In this section, we show that a HIGS-based IRC of the form (\ref{eq:HIGS-based IRC}) is an NI system according to Definition \ref{def:nonlinear NI}.
\begin{theorem}\label{theorem:HIGS-based IRC NI property}
	A HIGS-based IRC of the form (\ref{eq:HIGS-based IRC}) is an NI system with the storage function
	\begin{equation}\label{eq:V_h}
		V_h(x_h)=\frac{1}{2\widetilde \kappa}x_h^2
	\end{equation}
such that
\begin{equation}\label{eq:NI ineq for V_h}
	\dot V_h(x_h)\leq \widetilde e\dot x_h.
\end{equation}
\end{theorem}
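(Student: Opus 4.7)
The plan is to establish the dissipation inequality $\tilde e\,\dot x_h-\dot V_h\geq 0$ by reducing it to the already-known NI inequality for the underlying HIGS. Viewed with input $e$ and output $u=x_h$, the plain HIGS \eqref{eq:HIGS} satisfies $(e-x_h/k_h)\dot x_h\geq 0$ along every admissible trajectory, since it is nonlinear NI with storage function $x_h^2/(2k_h)$ (as shown in \cite{shi2022negative}). The HIGS-based IRC differs from the plain HIGS only through the feedthrough relation $e=\tilde e+Dx_h$ recorded in \eqref{eq:relation between e and tilde e}, and the storage function \eqref{eq:V_h} is chosen precisely so that this feedthrough telescopes away.

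Concretely, using the identity $1/\tilde\kappa=1/k_h-D$ (immediate from \eqref{eq:kappa}), I would first note positive definiteness of $V_h$, which follows because $k_h>0$ and $D<0$ force $\tilde\kappa>0$. Differentiating gives $\dot V_h(x_h)=x_h\dot x_h/k_h-Dx_h\dot x_h$, so
\begin{equation*}
\tilde e\,\dot x_h-\dot V_h(x_h)=(\tilde e+Dx_h)\dot x_h-x_h\dot x_h/k_h=(e-x_h/k_h)\dot x_h,
\end{equation*}
which is nonnegative by the NI property of the plain HIGS. This directly yields \eqref{eq:NI ineq for V_h}.

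A fully self-contained verification, if I did not want to invoke \cite{shi2022negative}, would split into the two modes of \eqref{eq:HIGS-based IRC}. In the gain mode, $x_h=\tilde\kappa\tilde e$ makes $\tilde e-x_h/\tilde\kappa=0$, so the inequality holds with equality. In the integrator mode, $\dot x_h=\omega_h e$ reduces the target to $\omega_h e(e-x_h/k_h)\geq 0$; the sector condition $\tilde eu\geq u^2/\tilde\kappa$ from \eqref{eq:tilde F}, rewritten through \eqref{eq:relation between e and tilde e}, becomes $u(k_h e-u)\geq 0$, which forces $e$ and $u$ to share a sign and hence delivers $e(k_h e-u)\geq 0$. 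Continuity of $x_h$ across mode switches, guaranteed by the algebraic equation $u=x_h$ in \eqref{eq:HIGS-based IRC}, then lifts the piecewise inequality to all $t\geq 0$. The only real obstacle is spotting the identity $1/\tilde\kappa=1/k_h-D$ that lets the feedthrough $D$ be absorbed cleanly into the new storage function; once that is noticed, everything else is routine.
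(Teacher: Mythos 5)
Your proof is correct, and its main line of argument is genuinely different from the paper's. The paper proves the dissipation inequality by a direct two-mode case analysis in the $(\widetilde e, x_h)$ variables: in the $\widetilde{\mc F}_1$ mode it writes $\dot V_h - \widetilde e\dot x_h = \frac{\omega_h}{\widetilde\kappa}(x_h-\widetilde\kappa\widetilde e)(Dx_h+\widetilde e)$, divides through by $x_h^2$ (treating $x_h=0$ separately), and shows the resulting quadratic $\psi(a)=(1-\widetilde\kappa a)(D+a)$ is nonpositive on $a\geq 1/\widetilde\kappa$ by locating its roots at $-D$ and $1/\widetilde\kappa$. You instead observe that $1/\widetilde\kappa = 1/k_h - D$ and that $e=\widetilde e + Dx_h$, so that
\begin{equation*}
	\widetilde e\,\dot x_h - \dot V_h(x_h) = \left(e - \tfrac{1}{k_h}x_h\right)\dot x_h,
\end{equation*}
reducing the claim to the known NI inequality for the plain HIGS (\ref{eq:HIGS}) (Lemma \ref{lemma:HIGS NNI} of the paper); this makes transparent that the feedthrough $D$ is exactly absorbed by rescaling the storage function from $x_h^2/(2k_h)$ to $x_h^2/(2\widetilde\kappa)$, and it generalizes immediately to any sector-bounded nonlinearity in place of the HIGS. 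The reduction is legitimate because $\widetilde{\mc F}$, $\widetilde{\mc F}_1$, $\widetilde{\mc F}_2$ were defined precisely by substituting $e=\widetilde e+Dx_h$ into $\mc F$, $\mc F_1$, $\mc F_2$, so every trajectory of (\ref{eq:HIGS-based IRC}) is a trajectory of (\ref{eq:HIGS}) with input $e$. Your self-contained fallback is essentially the paper's computation, but with the parabola argument replaced by the simpler sign argument ($eu\geq u^2/k_h$ forces $e$, $u$, and $k_he-u$ to share a sign, hence $e(k_he-u)\geq 0$), which is equally valid once the $u=0$ case is noted. The only quibble is your closing remark attributing continuity of $x_h$ at switches to the output equation $u=x_h$; continuity is a structural property of the HIGS switching rule, but neither your argument nor the paper's actually needs more than the pointwise inequality in each mode, so nothing is lost.
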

\begin{proof}
We have that $\widetilde \kappa>0$ because in (\ref{eq:kappa}) we have $k_h>0$ and $D<0$. Hence, the storage function $V_h(x_h)$ in (\ref{eq:V_h}) is positive definite.
We now show that (\ref{eq:NI ineq for V_h}) is satisfied in both integrator mode and gain mode. We have
\begin{equation}\label{eq:dot V-e dot x}
	\dot V_h(x_h)- \widetilde e\dot x_h = \left(\frac{1}{\widetilde \kappa}x_h-\widetilde e\right)\dot x_h.
\end{equation}
\textbf{\textit{Case 1}}. $(\widetilde e,u,\dot{\widetilde e})\in \widetilde {\mc F}_1$. In this case, we have $\widetilde eu \geq \frac{1}{\widetilde \kappa}u^2$ according to (\ref{eq:tilde F}) and (\ref{eq:tilde F1}). That is $\widetilde ex_h \geq \frac{1}{\widetilde \kappa}x_h^2$ since $u=x_h$. According to the system dynamics in $\widetilde {\mc F}_1$ mode in (\ref{eq:HIGS-based IRC}), we have
\begin{align}
	\dot V_h(x_h)- \widetilde e\dot x_h =&\ \left(\frac{1}{\widetilde \kappa}x_h-\widetilde e\right)(\omega_hDx_h+\omega_h\widetilde e)\notag\\
	=&\ \frac{\omega_h}{\widetilde \kappa}\left(x_h-\widetilde \kappa \widetilde e\right)(Dx_h+\widetilde e).\label{eq:dot V-e dot x in F1 mode}
\end{align}
We discuss in the cases that $x_h=0$ and $x_h\neq 0$. If $x_h=0$, then $\dot V_h(x_h)- \widetilde e\dot x_h=-\omega_h \widetilde e^2\leq 0$. If $x_h\neq 0$, then sector constraint $\widetilde ex_h \geq \frac{1}{\widetilde \kappa}x_h^2$ implies that $\frac{\widetilde e}{x_h}\geq \frac{1}{\widetilde \kappa}$. We can rewrite (\ref{eq:dot V-e dot x in F1 mode}) as
\begin{equation}\label{eq:dot V-e dot x in F1 mode v2}
	\dot V_h(x_h)- \widetilde e\dot x_h=\frac{\omega_h}{\widetilde \kappa}x_h^2\left(1-\widetilde \kappa \frac{\widetilde e}{x_h}\right)\left(D+\frac{\widetilde e}{x_h}\right).
\end{equation}
The right-hand side of (\ref{eq:dot V-e dot x in F1 mode v2}) is the product of a nonnegative quantity $\frac{\omega_h}{\widetilde \kappa}x_h^2$ and a function
\begin{equation*}
	\psi(a)=(1-\widetilde \kappa a)(D+ a),
\end{equation*}
where $a=\frac{\widetilde e}{x_h}\geq \frac{1}{\widetilde \kappa}$. Letting $\psi(a)=0$, we have that $a = \frac{1}{\widetilde \kappa}$ and $a=-D$. Note that $\frac{1}{\widetilde \kappa}=\frac{1}{k_h}-D>-D$. The graph of the function $\psi(a)$ is a parabola that opens downward and intersects with the horizontal axis at $-D$ and $\frac{1}{\widetilde \kappa}$. Therefore, $\psi(a)\leq 0$ for all $a\geq \frac{1}{\widetilde \kappa}$. This implies that $\dot V_h(x_h)- \widetilde e\dot x_h\leq 0$ also holds in the case $x_h\neq 0$.\\
\textbf{\textit{Case 2}}. $(\widetilde e,u,\dot{\widetilde e})\in \widetilde {\mc F}_2$. In this case, we have $x_h=\widetilde \kappa$ according to (\ref{eq:tilde F2}). Therefore, it follows from (\ref{eq:dot V-e dot x}) that $\dot V_h(x_h)-\widetilde e\dot x_h=0$.

Therefore, the condition (\ref{eq:NI ineq for V_h}) is satisfied in both modes. Hence, HIGS-based IRC of the form (\ref{eq:HIGS-based IRC}) is an NI system.
\end{proof}

\subsection{Stability for the interconnection of an NI system and a HIGS-based IRC}
Consider a SISO NI system with the transfer function $G(s)$ and the minimal realization:
\begin{subequations}\label{eq:G(s)}
\begin{align}
\dot x =&\ Ax+Bu,\label{eq:G(s) state equation}\\
y =&\ Cx,
\end{align}
\end{subequations}
where $x\in \mathbb R^n$, $u,y \in \mathbb R$ are the state, input and output of the system, respectively. Here, $A\in \mathbb R^{n\times n}$, $B\in \mathbb R^{n\times 1}$ and $C\in \mathbb R^{1\times n}$. We show that for any NI plant of the form (\ref{eq:G(s)}), there exists a HIGS-based IRC such that their closed-loop interconnection is asymptotically stable.
\begin{figure}[h!]
\centering
\psfrag{r}{\hspace{-0.3cm}$r=0$}
\psfrag{e_s}{\hspace{0.2cm}$\widetilde e$}
\psfrag{U_s}{\hspace{0.1cm}$u$}
\psfrag{Y_s}{\hspace{0.6cm}$y$}
\psfrag{G_s}{$G(s)$}
\psfrag{K_s}{\hspace{0.2cm}$\widetilde{\mc H}$}
\includegraphics[width=8.5cm]{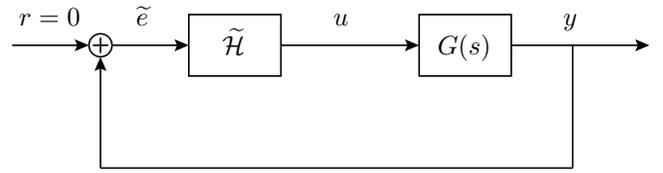}
\caption{Closed-loop interconnection of a HIGS-based IRC and a plant.}
\label{fig:HIGS-based IRC equivalence r=0}
\end{figure}

\begin{theorem}
	Consider the SISO minimal linear NI system (\ref{eq:G(s)}) with transfer function $G(s)$. There exists a HIGS-based IRC $\widetilde{\mc H}$ of the form (\ref{eq:HIGS-based IRC}) such that the closed-loop interconnection of the system (\ref{eq:G(s)}) and the HIGS-based IRC $\widetilde{\mc H}$ as shown in Fig.~\ref{fig:HIGS-based IRC equivalence r=0} is asymptotically stable.
\end{theorem}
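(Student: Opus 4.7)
The plan is to construct a Lyapunov-type storage function for the closed-loop system by combining the storage functions of the plant and of the HIGS-based IRC, and then invoke an invariance-principle argument tailored to the two operating modes in (\ref{eq:HIGS-based IRC}). First I would apply the NI lemma (Lemma~\ref{lemma:NI}) to the minimal realization $(A,B,C)$ to obtain $Y=Y^T>0$ with $AY+YA^T\leq 0$ and $B+AYC^T=0$. Setting $P:=Y^{-1}$, this gives the standard plant storage function $V_G(x)=\tfrac{1}{2}x^TPx$ satisfying $\dot V_G \leq u\,\dot y$ along (\ref{eq:G(s)}), and moreover $G(0)=-CA^{-1}B=CYC^T\geq 0$.

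Next I would exploit the design freedom in the controller parameters. Since $D<0$ and $k_h>0$ are at our disposal, the ``effective gain'' $\widetilde\kappa=k_h/(1-k_hD)$ in (\ref{eq:kappa}) can be made arbitrarily small (e.g., by choosing $|D|$ large), so one can always pick $k_h,D$ such that the DC loop gain condition
\begin{equation*}
\widetilde\kappa\,G(0)=\widetilde\kappa\,CYC^T<1
\end{equation*}
holds; $\omega_h>0$ is then free. With these parameters fixed, I propose the closed-loop storage function
\begin{equation*}
W(x,x_h)=\tfrac{1}{2}x^T P x+\tfrac{1}{2\widetilde\kappa}x_h^2 - (Cx)\,x_h,
\end{equation*}
which is the sum of the plant storage function, the controller storage function $V_h$ from (\ref{eq:V_h}), minus the cross term $y\,x_h$. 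A Schur-complement argument, using $P^{-1}=Y$ together with the DC loop gain condition, shows that $W$ is positive definite in $(x,x_h)$.

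Along trajectories of the closed loop (with $r=0$, so $\widetilde e=y=Cx$ and $u=x_h$), I would combine the two dissipation inequalities: $\dot V_G \leq x_h\dot y$ from the NI lemma and $\dot V_h \leq y\,\dot x_h$ from Theorem~\ref{theorem:HIGS-based IRC NI property}. Adding these and subtracting $\tfrac{d}{dt}(y x_h)=\dot y\,x_h+y\,\dot x_h$ gives
\begin{equation*}
\dot W=\dot V_G+\dot V_h-\tfrac{d}{dt}(y x_h)\leq 0,
\end{equation*}
and this bound is valid in both the $\widetilde{\mc F}_1$ and $\widetilde{\mc F}_2$ modes because Theorem~\ref{theorem:HIGS-based IRC NI property} establishes the dissipation inequality in each mode (with equality in gain mode). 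Continuity of $x_h$ across switching instants (noted after (\ref{eq:F2})) ensures $W$ does not jump, so the argument carries across mode changes.

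The main obstacle will be the final step: converting $\dot W\leq 0$ into asymptotic stability of the hybrid closed loop via an invariance-principle argument. On the largest invariant set where $\dot W\equiv 0$, the plant's dissipation relation forces $\tfrac{1}{2}x^T(A^TP+PA)x\equiv 0$, which together with the controller's dissipation equality constrains $x_h$ and $y=Cx$; I would then use the minimality of $(A,B,C)$ and observability arguments in the style of \cite{shi2021robust,shi2022negative} to conclude that the only such trajectory is $x\equiv 0$, $x_h\equiv 0$. Care is needed because of the mode switching of the HIGS, so I would lean on the invariance results for switched NI systems in \cite{shi2023nonlinear} (or mirror the LaSalle-type reasoning used for the plain HIGS controller in \cite{shi2022negative}) to handle trajectories that dwell in, or switch between, $\widetilde{\mc F}_1$ and $\widetilde{\mc F}_2$ in a uniform way.
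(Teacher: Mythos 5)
Your construction is exactly the paper's: the same quadratic Lyapunov function $W(x,x_h)=\tfrac{1}{2}x^TY^{-1}x+\tfrac{1}{2\widetilde\kappa}x_h^2-(Cx)x_h$, the same Schur-complement reduction of positive definiteness to $\widetilde\kappa\,G(0)<1$ (achievable by taking $|D|$ large), and the same combination of the plant and controller dissipation inequalities to get $\dot W\leq 0$ in both modes. Up to that point the proposal is correct and matches the paper step for step.

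The gap is in the invariance step, and it is a genuine one rather than a routine appeal to observability. On the set where $\dot W\equiv 0$ the controller's dissipation inequality holds with equality, which (as the paper shows) forces $x_h\equiv\widetilde\kappa\,\widetilde e$ and confines the controller to the gain mode $\widetilde{\mc F}_2$. In that mode the closed loop is the \emph{linear, lossless} positive-feedback interconnection of $G(s)$ with the static gain $\widetilde\kappa$, i.e.\ $\dot x=(A+\widetilde\kappa BC)x$, and $W$ is constant along its trajectories; a nontrivial periodic orbit of this linear system is a perfectly good candidate invariant set that minimality and observability of $(A,B,C)$ alone cannot exclude. The paper kills it by exploiting the mode-selection rule: staying in $\widetilde{\mc F}_2$ requires $\omega_h\widetilde e^2>k_h\widetilde e\dot{\widetilde e}$ at all times, and since on such an orbit $\widetilde e\dot{\widetilde e}$ must become positive (otherwise $V_h$ decreases and $\widetilde e\to 0$, or $\dot{\widetilde e}\equiv 0$ and observability plus $Y^{-1}-\widetilde\kappa C^TC>0$ force $x=0$), one can choose $\omega_h$ \emph{sufficiently small} so that the gain-mode condition is violated and the system is ejected back into the dissipative integrator mode. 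This is precisely where the existential quantifier in the theorem is used a second time, and it contradicts your assertion that ``$\omega_h>0$ is then free'': the argument does not go through for arbitrary $\omega_h$, only for $\omega_h$ chosen small relative to the gain-mode trajectories (which are $\omega_h$-independent, making the choice well posed). Deferring this step to the cited switched-NI invariance results does not supply the missing case analysis, because the decisive mechanism is the specific sector/switching condition $\omega_h\widetilde e^2>k_h\widetilde e\dot{\widetilde e}$ of the HIGS, not a generic observability property.
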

\begin{proof}
	Since the system (\ref{eq:G(s)}) is minimal and NI, then according to Lemma \ref{lemma:NI},  $\det A\neq 0$ and there exists a matrix $Y=Y^T>0$, $Y\in \mb R^{n\times n}$ such that
	\begin{equation}\label{eq:LMI conditions}
		AY+YA^T\leq 0, \textnormal{ and }\  B+AYC^T=0.
	\end{equation}
We construct the candidate Lyapunov function of the closed-loop system as follows
\begin{equation*}
	W(x,x_h)=\frac{1}{2}\begin{bmatrix}
		x^T & x_h
	\end{bmatrix}
	\begin{bmatrix}
		Y^{-1}&-C^T\\-C&\frac{1}{\widetilde \kappa}
	\end{bmatrix}\begin{bmatrix}
		x \\ x_h
	\end{bmatrix}
\end{equation*}
Since $Y>0$, then according to Schur complement theorem, the function $W(x,x_h)$ is positive definite if and only if
\begin{equation*}
	\frac{1}{\widetilde \kappa}-CYC^T>0.
\end{equation*}
That is
\begin{equation}\label{eq:kG(0)<1}
	\widetilde \kappa G(0)<1
\end{equation}
considering that $G(0)=-CA^{-1}B=CA^{-1}AYC^T=CYC^T$. We choose the parameters of the HIGS-based IRC to satisfy the condition (\ref{eq:kG(0)<1}).
Also, since $\widetilde \kappa>0$, then $W(x,x_h)$ is positive definite if and only if
\begin{equation}\label{eq:Y-CC}
	Y^{-1}-\widetilde \kappa C^TC>0.
\end{equation}
Therefore, (\ref{eq:kG(0)<1}) is satisfied if and only if (\ref{eq:Y-CC}) is satisfied.
We apply Lyapunov stability theorem in the following. Taking the time derivative of the function $W(x,x_h)$, we have that
\begin{align}
	\dot W&(x,x_h)\notag\\
	 =&\ x^TY^{-1}\dot x+\frac{1}{\widetilde \kappa}x_h\dot x_h-\dot x_hCx-x_hC\dot x\notag\\
	=& \left(x^TY^{-1}-x_hC\right)\dot x+\dot x_h\left(\frac{1}{\widetilde \kappa}x_h-Cx\right)\notag\\
	=& \left(x^TY^{-1}-uC\right)\dot x+\dot x_h\left(\frac{1}{\widetilde \kappa}x_h-\widetilde e\right)\notag\\
	=& \left(x^TY^{-1}+uB^TA^{-T}Y^{-1}\right)\dot x+\dot x_h\left(\frac{1}{\widetilde \kappa}x_h-\widetilde e\right)\notag\\
	=& \left(x^TA^T+uB^T\right)\left(A^{-T}Y^{-1}\right)\dot x+\dot x_h\left(\frac{1}{\widetilde \kappa}x_h-\widetilde e\right)\notag\\
	=&\ \dot x^TA^{-T}Y^{-1}\dot x+\dot x_h\left(\frac{1}{\widetilde \kappa}x_h-\widetilde e\right)\notag\\
	=&\ \frac{1}{2}\dot x^T\left(A^{-T}Y^{-1}+Y^{-1}A^{-1}\right)\dot x+\dot x_h\left(\frac{1}{\widetilde \kappa}x_h-\widetilde e\right),
\end{align}
where $u=x_h$ and $\widetilde e = y =Cx$ are also used. The condition (\ref{eq:LMI conditions}) implies that $A^{-T}Y^{-1}+Y^{-1}A^{-1}\leq 0$. Therefore, $\frac{1}{2}\dot x^T\left(A^{-T}Y^{-1}+Y^{-1}A^{-1}\right)\dot x\leq 0$. Also, $\dot x_h\left(\frac{1}{\widetilde \kappa}x_h-\widetilde e\right)=\dot V_h(x_h)-\widetilde e\dot x_h\leq 0$ as is shown in Theorem \ref{theorem:HIGS-based IRC NI property}. Hence, $\dot W(x,x_h)\leq 0$. The closed-loop system is Lyapunov stable. We apply LaSalle's invariance principle in the following to show that the closed-loop system is indeed asymptotically stable. When $\dot W(x,x_h)$ remains zero, we have both $\left(A^{-T}Y^{-1}+Y^{-1}A^{-1}\right)\dot x$ and $\dot x_h\left(\frac{1}{\widetilde \kappa}x_h-\widetilde e\right)$ remaining zero. The condition $\dot x_h\left(\frac{1}{\widetilde \kappa}x_h-\widetilde e\right)=0$ implies that $\dot x_h=0$ or $x_h=\widetilde \kappa \widetilde e$. We show in the following that the condition $\dot x_h=0$ indeed also implies $x_h=\widetilde \kappa \widetilde e$. We only provide a proof for the case of $\widetilde {\mc F}_1$ mode since $x_h=\widetilde \kappa \widetilde e$ is always true in the $\widetilde {\mc F}_2$ mode. According to the state equation in the $\widetilde {\mc F}_1$ mode given in (\ref{eq:HIGS-based IRC}), we have that $\widetilde e=-Dx_h$. Substituting this equation into the inequality in (\ref{eq:tilde F}), we have that $\left(-D-\frac{1}{\widetilde \kappa}\right)x_h^2\geq 0$. Note that $\left(-D-\frac{1}{\widetilde \kappa}\right)=-\frac{1}{k_h}<0$. Hence, in this case, $x_h=0$. Therefore, $\widetilde e=-Dx_h=0$. Thus, $x_h=0=\widetilde \kappa \widetilde e$. In the case that $x_h\equiv \widetilde \kappa \widetilde e$, we prove that the system cannot stay in the $\widetilde {\mc F}_1$ mode by contradiction. Suppose $x_h\equiv \widetilde \kappa \widetilde e$ and $\dot{x}_h = \omega_hDx_h+\omega_h\widetilde e$. Then, $\widetilde \kappa\dot {\widetilde e}=\omega_hD\widetilde \kappa \widetilde e+\omega_h\widetilde e$. That is $\dot {\widetilde e}=\frac{\omega_h}{1-k_hD}\widetilde e$. This implies that both $y$ and $x_h$ diverge considering that $\frac{\omega_h}{1-k_hD}>0$, $y=\widetilde e$ and $x_h=\widetilde \kappa \widetilde e$. This contradicts the Lyapunov stability of the interconnection as shown above. Therefore, the HIGS-based IRC can only stay in the $\widetilde {\mc F}_2$ mode when $x_h\equiv \widetilde \kappa \widetilde e$. In this case, according to (\ref{eq:tilde F2}), we have
\begin{equation}\label{eq:tilde F2 condition}
	\omega_h\widetilde e^2>k_h \widetilde e \dot{\widetilde e}.
\end{equation}
The condition (\ref{eq:tilde F2 condition}) cannot be satisfied by satisfying $\widetilde e \dot{\widetilde e}<0$ over time because then $\dot V_h(x_h)=\frac{1}{\widetilde \kappa}x_h\dot x_h=\widetilde \kappa \widetilde e \dot{\widetilde e}<0$. This implies that $y=\widetilde e= \frac{1}{\widetilde \kappa}x_h$ converges to zero, which is not the case considered here. Also, in the case that (\ref{eq:tilde F2 condition}) is satisfied with $\dot {\widetilde e}= 0$ overtime, we have that $\dot y=\dot {\widetilde e}\equiv 0$, which implies $\dot x \equiv 0$ according to the observability of the system (\ref{eq:G(s)}). The quantity $\dot x$ is given as follows
\begin{align}
	\dot x=&\ Ax+Bu=Ax+Bx_h=Ax+B\widetilde \kappa \widetilde e=Ax+\widetilde \kappa By\notag\\
	=&\ Ax+\widetilde \kappa BCx=(A+\widetilde \kappa BC)x=(A-\widetilde \kappa AYC^TC)x\notag\\
	=&\ AY(Y^{-1}-\widetilde \kappa C^TC)x.
\end{align}
According to the nonsingularity of the matrices $A$, $Y$ and also (\ref{eq:Y-CC}), $\dot x \neq 0$ for all $x\neq 0$. Therefore, $\dot {\widetilde e}\equiv 0$ implies $x=0$ and $x_h=0$. The system is already at the equilibrium. We have shown that (\ref{eq:tilde F2 condition}) cannot be satisfied with $\widetilde e \dot{\widetilde e}<0$ overtime, and the case $\dot {\widetilde e}=0$ implies the system state is already at the equilibrium. Therefore, we only need to deal with the case that (\ref{eq:tilde F2 condition}) is satisfied with $\widetilde e \dot{\widetilde e}>0$. Since
the trajectories of $\widetilde e$ and $\dot {\widetilde e}$ in the $\widetilde {\mc F}_2$ mode are independent of
$\omega_h$, we can always choose sufficiently small $\omega_h>0$ such that $\omega_h\widetilde e^2<k_h \widetilde e \dot{\widetilde e}$ in order to violate the condition (\ref{eq:tilde F2 condition}). Therefore, the condition $\dot W(x,x_h)\equiv 0$ can be violated by choosing suitable parameters. Then $W(x,x_h)$ will keep decreasing until the system state reaches the origin. This implies that the interconnection in Fig.~\ref{fig:HIGS-based IRC equivalence r=0} is asymptotically stable.
\end{proof}

\section{Proportional-integral-double-integral resonant control}\label{sec:PII2RC}
In this section, we propose another variant of the IRC, where the integrator $C(s)$ in Fig.~\ref{fig:CT_IRC} is replaced by a proportional-integral-double-integral ($\text{PII}^2$) controller with the transfer function
\begin{equation}\label{eq:tilde C(s)}
	\widetilde C(s)=k_p+\frac{k_1}{s}+\frac{k_2}{s^2}.
\end{equation}
Here, $k_p,k_1,k_2\in \mb R$ and we let $k_p,k_1,k_2>0$. Similar to the IRC, a $\text{PII}^2$ resonant controller ($\text{PII}^2\text{RC}$) can be equivalently constructed as shown in Fig.~\ref{fig:CT_IRC_equivalence}, where the transfer function of the $\text{PII}^2\text{RC}$ is
\begin{equation}\label{eq:tilde K(s) in tilde C(s)}
	\widetilde K(s)=\frac{\widetilde C(s)}{1-\widetilde C(s)D}.
\end{equation}
Substituting (\ref{eq:tilde C(s)}) to (\ref{eq:tilde K(s) in tilde C(s)}), we obtain the transfer function of the $\text{PII}^2\text{RC}$:
\begin{equation}\label{eq:tilde K(s)}
	\widetilde K(s)=\frac{k_ps^2+k_1s+k_2}{s^2-k_pDs^2-k_1Ds-k_2D}.
\end{equation}
\begin{lemma}
	The $\text{PII}^2\text{RC}$ system with transfer function $\widetilde K(s)$ given in (\ref{eq:tilde K(s)}) is an SNI system.
\end{lemma}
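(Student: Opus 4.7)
The plan is to verify the two ingredients of the SNI definition directly from the expression
\[
\widetilde K(s)=\frac{k_ps^2+k_1s+k_2}{(1-k_pD)s^2-k_1Ds-k_2D},
\]
namely (i) internal stability, i.e. that all poles of $\widetilde K(s)$ lie in the open left-half plane, and (ii) strict negativity of the imaginary part of $\widetilde K(j\omega)$ for every $\omega>0$. Properness is clear from inspection (numerator and denominator are both quadratic in $s$).

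For stability, I would write the denominator as $a_2s^2+a_1s+a_0$ with $a_2=1-k_pD$, $a_1=-k_1D$, and $a_0=-k_2D$. Using the standing assumptions $k_p,k_1,k_2>0$ and $D<0$, each coefficient is strictly positive, so the Routh--Hurwitz criterion for a quadratic yields both roots in the open LHP. In particular the denominator never vanishes on the imaginary axis, which is needed before evaluating frequency responses.

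For the frequency-domain inequality, I would substitute $s=j\omega$ and rewrite
\[
\widetilde K(j\omega)=\frac{(k_2-k_p\omega^2)+jk_1\omega}{\bigl(-(1-k_pD)\omega^2-k_2D\bigr)-jk_1D\omega}.
\]
Multiplying numerator and denominator by the complex conjugate of the denominator, I can extract
\[
\operatorname{Im}\widetilde K(j\omega)=\frac{k_1\omega\bigl[-(1-k_pD)\omega^2-k_2D\bigr]-(k_2-k_p\omega^2)(-k_1D\omega)}{|(1-k_pD)\omega^2+k_2D|^2+k_1^2D^2\omega^2}.
\]
The key algebraic step is that the $k_pD$ terms and the $k_2D$ terms in the numerator cancel, leaving the remarkably clean expression $-k_1\omega^3$ in the numerator. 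Since $k_1>0$ and the denominator is strictly positive (stability guarantees this), we get $\operatorname{Im}\widetilde K(j\omega)<0$ for every $\omega>0$, which is equivalent to $j[\widetilde K(j\omega)-\widetilde K(j\omega)^*]>0$.

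The main obstacle I expect is simply keeping the signs straight in the imaginary-part computation, because the feedthrough $D$ enters both numerator and denominator with negative coefficients and contributes several terms that partially cancel. Once those cancellations are correctly tracked, the result drops out immediately and the combination of stability plus strict negativity of the imaginary part certifies $\widetilde K(s)$ as SNI.
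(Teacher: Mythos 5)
Your proposal is correct and follows essentially the same route as the paper: establish that the denominator $(1-k_pD)s^2-k_1Ds-k_2D$ has all positive coefficients (hence open left-half-plane poles), then compute the frequency response to find $j[\widetilde K(j\omega)-\widetilde K(j\omega)^*]=\frac{2k_1\omega^3}{((1-k_pD)\omega^2+k_2D)^2+(k_1D\omega)^2}>0$ for all $\omega>0$, which is exactly the cancellation you describe. Your write-up is in fact slightly more careful than the paper's, since it makes the Routh--Hurwitz step explicit rather than merely asserting the pole locations.
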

\begin{proof}
	Since $k_p,k_1,k_2>0$ and $D<0$, then the poles of $\widetilde K(s)$ satisfies $Re[s]<0$. Also,
\begin{align}
	j[\widetilde K(j\omega)-\widetilde K(-j\omega)]=&\frac{2k_1\omega^3}{((1-k_pd)\omega^2+k_2d)^2+(k_1d\omega)^2}\notag\\
	>& 0,
\end{align}
for all $\omega>0$. Therefore, $\widetilde K(s)$ is SNI.
\end{proof}

\begin{theorem}
	Consider the SISO minimal linear NI system (\ref{eq:G(s)}) with transfer function $G(s)$. It can be stabilized using a $\text{PII}^2\text{RC}$ $\widetilde K(s)$ given in (\ref{eq:tilde K(s)}) satisfying $D<-G(0)$.
\end{theorem}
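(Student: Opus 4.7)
The plan is to invoke the classical positive-feedback stability theorem for a linear NI plant controlled by a linear SNI controller (see e.g.\ \cite{lanzon2017feedback}), using the preceding lemma which already establishes that $\widetilde K(s)$ is SNI. So the proof reduces to verifying the standard hypotheses of that theorem for the present pair $(G(s),\widetilde K(s))$.

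First I would dispose of the ``at infinity'' side conditions. Since the minimal realization (\ref{eq:G(s)}) of $G(s)$ has no direct feedthrough, $G(\infty)=0$, so $G(\infty)\widetilde K(\infty)=0$ and $I-G(\infty)\widetilde K(\infty)$ is trivially nonsingular. Evaluating (\ref{eq:tilde K(s)}) as $s\to\infty$ gives
\begin{equation*}
\widetilde K(\infty)=\frac{k_p}{1-k_pD},
\end{equation*}
which is strictly positive because $k_p>0$ and $D<0$. Thus the usual sign and well-posedness assumptions placed on the SNI controller at infinity are met.

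The substantive step is the DC loop-gain condition. From (\ref{eq:tilde K(s)}),
\begin{equation*}
\widetilde K(0)=\frac{k_2}{-k_2D}=-\frac{1}{D}>0,
\end{equation*}
so $G(0)\widetilde K(0)=-G(0)/D$. Multiplying the hypothesis $D<-G(0)$ by the negative number $D$ flips the inequality and gives $D^2 > -G(0)D$, i.e.\ $-G(0)/D<1$. Therefore
\begin{equation*}
\lambda_{\max}\!\bigl(G(0)\widetilde K(0)\bigr)=G(0)\widetilde K(0)<1,
\end{equation*}
which is the DC gain condition in the NI/SNI feedback theorem. Applying that theorem to $G(s)$ (NI, minimal, strictly proper) and $\widetilde K(s)$ (SNI, from the preceding lemma) yields internal asymptotic stability of the closed loop in Fig.~\ref{fig:CT_IRC_equivalence}.

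I do not expect any genuine obstacle here; the only point worth a sentence is that the single scalar condition $D<-G(0)$ uniformly handles both sign regimes of $G(0)$. When $G(0)\le 0$ it is automatically implied by $D<0$ and the loop gain $-G(0)/D$ is nonpositive; when $G(0)>0$ it is the substantive constraint $|D|>G(0)$ that pulls the loop gain strictly below unity. In either case the strict inequality needed to invoke the SNI feedback theorem holds, completing the argument.
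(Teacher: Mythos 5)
Your proposal is correct and follows essentially the same route as the paper: both reduce the claim to the standard NI/SNI positive-feedback stability theorem, use the preceding lemma for the SNI property of $\widetilde K(s)$, compute $\widetilde K(0)=-1/D$, and show the DC-gain condition $G(0)\widetilde K(0)<1$ is equivalent to $D<-G(0)$. Your version is slightly more careful in verifying the conditions at infinity and the sign manipulations, while the paper additionally notes $G(0)=CYC^T\geq 0$ so that the condition $D<-G(0)$ is always compatible with $D<0$; these are cosmetic differences only.
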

\begin{proof}
	Since $G(s)$ is NI with $G(\infty)=0$ and $\widetilde K(s)$ is SNI, then the interconnection of $G(s)$ and $\widetilde K(s)$ is asymptotically stable if and only if $G(0)\widetilde K(0)<1$. According to (\ref{eq:tilde K(s)}), $K(0)=-\frac{1}{D}$. Therefore, stability is achieved if and only if $G(0)(-\frac{1}{D})<1$. That is, $D<-G(0)$. Note that here $G(0)=-CA^{-1}B=CYC^T\geq 0$, according to Lemma \ref{lemma:NI}. 
\end{proof}

\section{HIGS-based $\text{PII}^2\text{RC}$}\label{sec:HIGS PII2RC}
Motivated by the $\text{PII}^2\text{RC}$ proposed in Section \ref{sec:PII2RC}, we consider a HIGS-based $\text{PII}^2\text{RC}$ in this section. To be specific, we consider replacing the single integrator $k_1/s$ in the $\text{PII}^2\text{RC}$ by a single HIGS of the form (\ref{eq:HIGS}). Also, we replace the double integrator $k_2/s^2$ by two serial cascaded HIGS, both of the form (\ref{eq:HIGS}). 
\begin{figure}[h!]
\centering
\psfrag{r}{\hspace{-0.2cm}$r=0$}
\psfrag{e_s}{\hspace{0.2cm}$e$}
\psfrag{U_s}{\hspace{0.1cm}$u$}
\psfrag{Y_s}{\hspace{0.3cm}$y$}
\psfrag{barY_s}{\hspace{0.2cm}$\overline y$}
\psfrag{G_s}{$G(s)$}
\psfrag{C_s}{\hspace{-0.03cm}HIGS}
\psfrag{D}{\hspace{0.02cm}$D$}
\psfrag{H_1}{\hspace{0.00cm}$\mc H_1$}
\psfrag{H_2}{\hspace{0.00cm}$\mc H_2$}
\psfrag{H_3}{\hspace{0.00cm}$\mc H_3$}
\psfrag{k_p}{\hspace{0.02cm}$k_p$}
\psfrag{x_1}{\hspace{0.02cm}$x_{h1}$}
\psfrag{x_2}{\hspace{0.02cm}$x_{h2}$}
\psfrag{x_3}{\hspace{0.02cm}$x_{h3}$}
\psfrag{y_h}{\hspace{0.02cm}$y_h$}
\psfrag{H_h}{\hspace{0.02cm}$\mc H_h$}
\includegraphics[width=8.5cm]{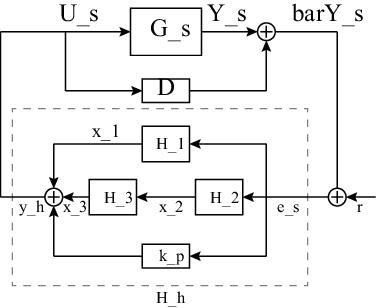}
\caption{Closed-loop interconnection of a HIGS-based $\text{PII}^2$ controller $\mc H_h$ and $G(s)+D$. The system $\mc H_h$ as shown in the dotted line box with input $e$ and output $y_h$ is the parallel cascade of a single HIGS $\mc H_1$, the serial cascade of two HIGS $\mc H_2$ and $\mc H_3$, and also a gain $k_p$.}
\label{fig:HIGS-based PII2RC}
\end{figure}

The HIGS $\mc H_1$, $\mc H_2$ and $\mc H_3$ are of the form (\ref{eq:HIGS}), with different parameters. We provide the system models of $\mc H_1$, $\mc H_2$ and $\mc H_3$ again in the following to distinguish different parameters in these three HIGS:

\begin{equation}\label{eq:HIGS i}
		\mathcal{H}_i:
		\begin{cases}
			\dot{x}_{hi} = \omega_{hi} e_i, & \text{if}\, (e_i,x_{hi},\dot{e_i}) \in \mathcal{F}_{i1}\\
			x_{hi} = k_{hi}e_i, & \text{if}\, (e_i,x_{hi},\dot{e_i}) \in \mathcal{F}_{i2}
		\end{cases}
	\end{equation}
where $e_i \in \mathbb{R}$ is the input, $x_{hi}\in \mb R$ is the state and also the output of the HIGS $\mc H_i$ ($i=1,2,3$), respectively. Here, $\dot{e}_i$ is the time derivative of the input $e_i$, which is assumed to be continuous and piecewise differentiable. The parameters $\omega_{hi} \in [0,\infty)$ and $k_{hi} \in (0, \infty)$ represent the integrator frequency and gain value of the HIGS $\mc H_i$, respectively. Also, we have
\begin{align}
	\mathcal{F}_i &= \{ (e_i,x_{hi},\dot{e}_i) \in \mathbb{R}^3 \mid e_ix_{hi} \geq \frac{1}{k_{hi}}x_{hi}^2\},\label{eq:Fi}\\
	\mathcal{F}_{i1}& = \mathcal{F}_i \setminus \mathcal{F}_{i2};\label{eq:Fi1}\\
	\mathcal{F}_{i2}& = \{(e_i,x_{hi},\dot{e}_i) \in \mathbb{R}^3 \mid x_{hi} = k_{hi}e_i\text{ and } \omega_{hi}e_i^2 > k_{hi}e_i\dot{e}_i\}.\label{eq:Fi2}
\end{align}
According to the setting of the system $\mc H_h$ in Fig.~\ref{fig:HIGS-based PII2RC}, we have that
\begin{equation*}
	e_1=e_2=e;\text{ and } e_3=x_{h2}.
\end{equation*}
The integrator frequency $\omega_{h1}$ of the HIGS $\mc H_1$ corresponds to the parameter $k_1$ of the $\text{PII}^2$ controller $\widetilde C(s)$ given in (\ref{eq:tilde C(s)}). Also, the product of the integrator frequencies $\omega_{h2}$ and $\omega_{h3}$ corresponds to the parameter $k_2$ in (\ref{eq:tilde C(s)}).

We prove in the following that the closed-loop interconnection shown in Fig.~\ref{fig:HIGS-based PII2RC} is asymptotically stable. First, we provide some preliminary results on the nonlinear NI property of a single HIGS and two cascaded HIGS; see also \cite{shi2023MEMS}. Note that the notation used in the present paper is different from that in \cite{shi2023MEMS}.

\begin{lemma}\label{lemma:HIGS NNI}(see \cite{shi2023MEMS,shi2022negative})
A HIGS $\mc H_1$ of the form (\ref{eq:HIGS i}) is a nonlinear NI system from the input $e_1$ to the output $x_{h1}$ with the storage function
	\begin{equation*}
		V_1(x_{h1}) = \frac{1}{2k_{h1}}x_{h1}^2
	\end{equation*}
satisfying
	\begin{equation}\label{eq:theorem_Vdot_HIGS}
	    \dot{V}(x_{h1}) \leq e_1\dot{x}_{h1}.
	\end{equation}
Also, if $\dot{V}(x_{h1}) = e_1\dot{x}_{h1}$ then for all $t\in [t_a,t_b]$ we have that $x_{h1}=k_{h1}e_1$.
\end{lemma}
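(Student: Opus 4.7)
The plan is to verify the dissipation inequality $\dot V_1(x_{h1})\leq e_1\dot x_{h1}$ on each of the two HIGS operating modes, using exactly the same strategy as in the proof of Theorem \ref{theorem:HIGS-based IRC NI property} but with the feedthrough $D$ set to zero. Positive definiteness of $V_1(x_{h1})=\frac{1}{2k_{h1}}x_{h1}^2$ is immediate from $k_{h1}>0$. The key identity to start from is
\begin{equation*}
\dot V_1(x_{h1})-e_1\dot x_{h1}=\left(\tfrac{1}{k_{h1}}x_{h1}-e_1\right)\dot x_{h1},
\end{equation*}
which I would compute by differentiating $V_1$ along trajectories and then substituting the mode-dependent expression for $\dot x_{h1}$.

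In the integrator mode $(e_1,x_{h1},\dot e_1)\in\mc F_{11}$, substituting $\dot x_{h1}=\omega_{h1}e_1$ yields $\frac{\omega_{h1}}{k_{h1}}(x_{h1}-k_{h1}e_1)e_1$. I would split into the subcases $x_{h1}=0$ (in which the sector constraint $e_1 x_{h1}\geq x_{h1}^2/k_{h1}$ is trivial and the expression equals $-\omega_{h1}e_1^2\leq 0$) and $x_{h1}\neq 0$. In the latter, I would divide the sector constraint by $x_{h1}^2$ to write it as $e_1/x_{h1}\geq 1/k_{h1}$, and then rewrite the quantity above in the factored form $\frac{\omega_{h1}}{k_{h1}}x_{h1}^2\bigl(\frac{e_1}{x_{h1}}\bigr)\bigl(1-k_{h1}\frac{e_1}{x_{h1}}\bigr)$, or more directly observe that $e_1$ and $x_{h1}$ must share sign so that $(x_{h1}-k_{h1}e_1)e_1\leq 0$ follows from $|x_{h1}|\leq k_{h1}|e_1|$. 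In the gain mode $(e_1,x_{h1},\dot e_1)\in\mc F_{12}$, the identity $x_{h1}=k_{h1}e_1$ makes the factor $\bigl(\tfrac{1}{k_{h1}}x_{h1}-e_1\bigr)$ vanish, so the expression is zero. This establishes (\ref{eq:theorem_Vdot_HIGS}).

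For the second claim, suppose $\dot V_1=e_1\dot x_{h1}$ identically on $[t_a,t_b]$. On any subinterval on which the HIGS is in gain mode, the conclusion $x_{h1}=k_{h1}e_1$ is built into (\ref{eq:Fi2}). On any subinterval on which the HIGS is in integrator mode, the identity above forces $(x_{h1}-k_{h1}e_1)e_1\equiv 0$; wherever $e_1\neq 0$ this gives $x_{h1}=k_{h1}e_1$ directly, and wherever $e_1=0$ the sector constraint collapses to $x_{h1}^2\leq 0$, so $x_{h1}=0=k_{h1}e_1$. Hence $x_{h1}=k_{h1}e_1$ throughout $[t_a,t_b]$.

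The only mildly delicate step is the sign/absolute-value bookkeeping in the integrator-mode case $x_{h1}\neq 0$, i.e., making transparent that the sector constraint forces $e_1$ and $x_{h1}$ to have the same sign with $|x_{h1}|\leq k_{h1}|e_1|$; the rest is algebraic. I would lean on the parabola-style argument already used in the proof of Theorem \ref{theorem:HIGS-based IRC NI property} to keep the writing uniform with the preceding section.
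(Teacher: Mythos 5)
Your proof is correct. Note that the paper itself does not prove Lemma \ref{lemma:HIGS NNI}: it is stated with a pointer to \cite{shi2023MEMS,shi2022negative}, and only the equality case is separately isolated as Lemma \ref{lemma:HIGS lossless condition}, again by citation. So you are supplying an argument the paper omits, and the one you supply is exactly the right one: it is the $D=0$ specialization of the paper's own proof of Theorem \ref{theorem:HIGS-based IRC NI property}, with the identity $\dot V_1(x_{h1})-e_1\dot x_{h1}=\bigl(\tfrac{1}{k_{h1}}x_{h1}-e_1\bigr)\dot x_{h1}$, the substitution $\dot x_{h1}=\omega_{h1}e_1$ in the integrator mode, the split into $x_{h1}=0$ and $x_{h1}\neq 0$ with the sector constraint rewritten as $e_1/x_{h1}\geq 1/k_{h1}$, and the vanishing of the bracket in the gain mode. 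Your handling of the equality case is also sound and slightly cleaner than invoking the parabola argument: in the integrator mode $(x_{h1}-k_{h1}e_1)e_1\equiv 0$ gives $x_{h1}=k_{h1}e_1$ when $e_1\neq 0$, and when $e_1=0$ the sector inequality forces $x_{h1}^2\leq 0$, hence $x_{h1}=0$; in the gain mode the identity is definitional from (\ref{eq:Fi2}). No gaps.
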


\begin{lemma}\label{lemma:HIGS lossless condition}(see also \cite{shi2023MEMS})
Consider a HIGS $\mc H_1$ of the form (\ref{eq:HIGS i}). If $\dot{V}(x_{h1}) = e_1\dot{x}_{h1}$, then $x_{h1} = k_{h1}e_1$.
\end{lemma}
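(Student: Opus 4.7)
The plan is to start from an explicit computation of $\dot V_1 - e_1\dot x_{h1}$ and then reduce the hypothesis to a simple factored identity that can be analyzed mode by mode. Since $V_1(x_{h1})=\frac{1}{2k_{h1}}x_{h1}^2$, a direct differentiation gives
\begin{equation*}
\dot V_1(x_{h1}) - e_1\dot x_{h1} = \left(\frac{1}{k_{h1}}x_{h1} - e_1\right)\dot x_{h1},
\end{equation*}
so the hypothesis $\dot V_1 = e_1\dot x_{h1}$ is equivalent to $(x_{h1}/k_{h1} - e_1)\,\dot x_{h1}=0$. Either factor being zero should lead to the desired conclusion, but the argument depends on the active mode.

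In the gain mode $(e_1,x_{h1},\dot e_1)\in \mathcal{F}_{12}$, the algebraic constraint $x_{h1}=k_{h1}e_1$ is built into the definition of the HIGS, so the conclusion is immediate. The work therefore lies in the integrator mode $(e_1,x_{h1},\dot e_1)\in \mathcal{F}_{11}$. Here, substituting the dynamics $\dot x_{h1}=\omega_{h1}e_1$ into the identity above gives
\begin{equation*}
\omega_{h1}\,e_1\left(\frac{1}{k_{h1}}x_{h1} - e_1\right) = 0,
\end{equation*}
so (for the non-degenerate case $\omega_{h1}>0$) either $x_{h1}=k_{h1}e_1$ already, in which case we are done, or $e_1=0$.

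The delicate sub-case is $e_1=0$: then the identity is trivially satisfied and does not directly force $x_{h1}=k_{h1}e_1$. I would close this gap by invoking the sector constraint $\mathcal{F}_1\supset \mathcal{F}_{11}$ from (\ref{eq:Fi}), which gives $e_1 x_{h1}\geq \tfrac{1}{k_{h1}}x_{h1}^2$. Setting $e_1=0$ forces $x_{h1}^2\leq 0$, hence $x_{h1}=0$, and therefore $x_{h1}=0=k_{h1}e_1$ as required. This is the main (and only really substantive) obstacle, and it is overcome by the sector constraint rather than by the dynamics. Combining the two modes yields the conclusion in all admissible cases.
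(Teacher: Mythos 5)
Your proof is correct, and it is genuinely more informative than what the paper provides: the paper does not actually prove this lemma, it simply defers to it being ``a special single channel case of Lemma 4 in \cite{shi2023MEMS}.'' Your argument supplies the missing elementary reasoning. The factorization $\dot V_1 - e_1\dot x_{h1} = (\tfrac{1}{k_{h1}}x_{h1}-e_1)\dot x_{h1}$, the immediate dispatch of the gain mode, and the key observation that the $e_1=0$ sub-case in the integrator mode is closed by the sector constraint $e_1x_{h1}\geq \tfrac{1}{k_{h1}}x_{h1}^2$ (forcing $x_{h1}=0$) rather than by the dynamics, is exactly the right way to make the statement self-contained; this is essentially the same mechanism used in the paper's proof of Theorem \ref{theorem:HIGS-based IRC NI property} for the shifted HIGS. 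The one caveat is the degenerate case $\omega_{h1}=0$, which the parameter range $\omega_{h1}\in[0,\infty)$ formally permits: there the integrator mode gives $\dot x_{h1}\equiv 0$, the hypothesis holds vacuously, and the conclusion can fail, so the lemma as stated implicitly requires $\omega_{h1}>0$. You flag this correctly; it is harmless in context because the stability theorems that invoke the lemma choose $\omega_{h1}>0$ (indeed, sufficiently small but positive), but it is worth stating the assumption $\omega_{h1}>0$ explicitly if the lemma is to stand alone.
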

\begin{proof}
	This lemma is a special single channel case of Lemma 4 in \cite{shi2023MEMS}.
\end{proof}

For the cascade of the HIGS $mc H_2$ and $\mc H_3$, we assume that
\begin{equation}\label{eq:k2=k3}
	 k_{h2}=k_{h3}, \text{ and } \omega_{h2}<\omega_{h3}.
\end{equation}
This assumption simplifies the proof of Lemma \ref{lemma:cascade HIGS NI}, which is further used Theorem \ref{theorem:HIGS-based IRC NI property}. As Theorem \ref{theorem:HIGS based PII^2RC existence} shows the existence of a stabilizing HIGS-based $\text{PII}^2\text{RC}$ for an NI plant. Therefore, (\ref{eq:k2=k3}) is not a necessary condition for choosing a stabilizing HIGS-based $\text{PII}^2\text{RC}$.

\begin{lemma}\label{lemma:cascade HIGS NI}(see \cite{shi2023MEMS})
	Consider two HIGS $\mc H_2$ and $\mc H_3$ of the form (\ref{eq:HIGS i}) and satisfy (\ref{eq:k2=k3}). Then the serial cascade of $\mc H_2$ and $\mc H_3$ as shown in Fig.~\ref{fig:cascade} is a nonlinear NI system from the input $e_2$ to the output $x_{h3}$ with the storage function
\begin{equation}\label{eq:V for cascade}
V_2(x_{h2},x_{h3}) = \frac{1}{2}x_{h2}^2
\end{equation}
satisfying
\begin{equation}\label{eq:NI ineq cascade}
\dot V(x_{h2},x_{h3}) \leq e_2 \dot x_{h3}.	
\end{equation}
Moreover, if $\dot V(x_{h2},x_{h3})=e_2\dot x_{h3}$ over a time interval $[t_a,t_b]$, where $t_a<t_b$, then for all $t\in [t_a,t_b]$ we have that $x_{h2} = k_{h2}e_2$ and $x_{h3} = k_{h3}x_{h2}=k_{h2}^2e_2$. 
\end{lemma}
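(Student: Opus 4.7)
The plan is to verify both assertions by a case split on the mode pair $(\mc H_2,\mc H_3)$, giving four combinations: integrator/integrator, integrator/gain, gain/integrator, and gain/gain. In each case I would compute the residual $e_2\dot x_{h3}-\dot V_2 = e_2\dot x_{h3}-x_{h2}\dot x_{h2}$ explicitly, using the active mode equation from (\ref{eq:HIGS i}), the cascade identification $e_3=x_{h2}$, and the coupling assumption (\ref{eq:k2=k3}).

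For the inequality (\ref{eq:NI ineq cascade}): in integrator/integrator mode the residual collapses to $(\omega_{h3}-\omega_{h2})\, e_2 x_{h2}$, which is nonnegative because the sector constraint (\ref{eq:Fi}) for $\mc H_2$ gives $e_2 x_{h2}\geq x_{h2}^2/k_{h2}\geq 0$ while $\omega_{h2}<\omega_{h3}$. In integrator/gain mode, substituting $\dot x_{h3}=k_{h3}\dot x_{h2}=k_{h3}\omega_{h2}e_2$ together with $k_{h2}=k_{h3}$ reduces the residual to $\omega_{h2} e_2(k_{h2}e_2-x_{h2})$; the sector constraint forces $x_{h2}$ and $k_{h2}e_2-x_{h2}$ to share a sign, and, when $x_{h2}\neq 0$, also forces $e_2$ and $x_{h2}$ to share a sign, so the product is $\geq 0$. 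In gain/integrator mode, the strict inequality $\omega_{h2}e_2^2>k_{h2}e_2\dot e_2$ that defines $\mc H_2$'s gain-mode region combines with $\omega_{h3}>\omega_{h2}$ to give $e_2\dot x_{h3}-\dot V_2 = k_{h2}(\omega_{h3}e_2^2 - k_{h2}e_2\dot e_2)>0$. Finally, in gain/gain mode one has $x_{h2}=k_{h2}e_2$ and $x_{h3}=k_{h3}x_{h2}$, so the residual vanishes identically.

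For the equality statement I would revisit each case and argue that sustained equality on $[t_a,t_b]$ pins down the two identities. Gain/gain gives them by definition; in integrator/gain, equality reduces to $\omega_{h2}e_2(k_{h2}e_2-x_{h2})\equiv 0$, and a short subcase analysis (using integrator-mode dynamics when $e_2\equiv 0$) shows $x_{h2}=k_{h2}e_2$ throughout, with $x_{h3}=k_{h3}x_{h2}$ automatic in $\mc H_3$'s gain mode. The integrator/integrator and gain/integrator cases carry the extra work: sustained equality combined with $\omega_{h2}<\omega_{h3}$ and the strict inequality defining $\mc H_2$'s gain-mode region forces $e_2 x_{h2}\equiv 0$, hence $x_{h2}\equiv 0$ by the sector bound; integrator dynamics then yield $e_2\equiv 0$, and finally the sector on $\mc H_3$ applied to $e_3=x_{h2}\equiv 0$ forces $x_{h3}\equiv 0$, so the claimed identities hold trivially.

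The main obstacle is the bookkeeping across the four mode combinations, and in particular using the strict sector-exit inequality $\omega_{hi}e_i^2>k_{hi}e_i\dot e_i$ to discard trajectories that sit on the sector boundary while still being labelled as integrator mode. The assumption (\ref{eq:k2=k3}) is essential rather than cosmetic: without $k_{h2}=k_{h3}$ the integrator/gain residual would not factor cleanly, and without $\omega_{h2}<\omega_{h3}$ the integrator/integrator residual would be sign-indefinite.
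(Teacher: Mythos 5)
Your proposal is correct in substance, but it takes a genuinely different route from the paper: the paper does not prove this lemma internally at all, instead invoking Theorem 5 of \cite{shi2023MEMS} with the free parameter there set to $a=\tfrac{k_{h3}}{2k_{h2}}=\tfrac12$ (which is exactly why $V_2$ depends only on $x_{h2}$ and is merely positive semidefinite), whereas you reconstruct the argument from scratch by a four-way mode split. Your computations check out: the residual $e_2\dot x_{h3}-x_{h2}\dot x_{h2}$ equals $(\omega_{h3}-\omega_{h2})e_2x_{h2}$ in integrator/integrator, $\omega_{h2}e_2(k_{h2}e_2-x_{h2})$ in integrator/gain, $k_{h2}(\omega_{h3}e_2^2-k_{h2}e_2\dot e_2)$ in gain/integrator (strictly positive, since gain mode forces $e_2\neq 0$ and $\omega_{h2}e_2^2>k_{h2}e_2\dot e_2$, so this mode pair is simply impossible under sustained equality), and $(k_{h3}-k_{h2})e_2\dot x_{h2}=0$ in gain/gain; the sign arguments via the sector constraint are sound, and your observation that $k_{h2}=k_{h3}$ is what kills the gain/gain residual is the right way to see why (\ref{eq:k2=k3}) is not cosmetic. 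Your version has the advantage of being verifiable inside the paper; the paper's version is shorter but asks the reader to trust that the cited proof survives the degenerate choice $a=\tfrac12$ ``with slight modifications.''

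Two caveats you should make explicit. First, in the integrator/integrator and integrator/gain branches of the equality analysis, the steps ``$\omega_{h2}e_2(k_{h2}e_2-x_{h2})\equiv 0$ implies $e_2(k_{h2}e_2-x_{h2})\equiv 0$'' and ``$x_{h2}\equiv 0$ plus integrator dynamics yield $e_2\equiv 0$'' both silently assume $\omega_{h2}>0$; the paper only imposes $\omega_{h2}\in[0,\infty)$ with $\omega_{h2}<\omega_{h3}$, and for $\omega_{h2}=0$ the ``moreover'' conclusion genuinely fails (take $x_{h2}\equiv 0$ with $e_2$ a nonzero constant). This is a defect of the lemma as stated rather than of your argument, but you should state the assumption. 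Second, you do not address the fact that $V_2$ is not positive definite in $(x_{h2},x_{h3})$, so the cascade is NI only in a relaxed sense relative to Definition \ref{def:nonlinear NI}; the paper flags this explicitly and you should too, since positive definiteness of the composite Lyapunov function in Theorem \ref{theorem:HIGS based PII^2RC existence} has to be recovered elsewhere.
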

\begin{proof}
	The proof follows directly from Theorem 5 in \cite{shi2023MEMS}, with (\ref{eq:k2=k3}) assumed. The parameter $a$ in Theorem 5 in \cite{shi2023MEMS} is allowed to take the value $a = \frac{k_{h3}}{2k_{h2}}=\frac{1}{2}$, which results the storage function $V_2(x_{h2},x_{h3})$ in (\ref{eq:V for cascade}) to be positive semidefinite instead of positive definite. Regarding the analysis of the case where $\dot V(x_{h2},x_{h3})=e_2\dot x_{h3}$ over a finite time interval, the proof of Theorem 5 in \cite{shi2023MEMS} still remains valid under the assumptions (\ref{eq:k2=k3}) and $a = \frac{1}{2}$, with slight modifications.
\end{proof}

\begin{figure}[h!]
\centering
\psfrag{e_2}{$e_2$}
\psfrag{e_3}{$e_3$}
\psfrag{H_2}{$\mc H_2$}
\psfrag{H_3}{$\mc H_3$}
\psfrag{x_2}{$x_{h2}$}
\psfrag{x_3}{$x_{h3}$}
\includegraphics[width=8.5cm]{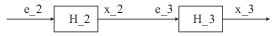}
\caption{The serial cascade of two HIGS $\mc H_2$ and $\mc H_3$}
\label{fig:cascade}
\end{figure}

\begin{theorem}\label{theorem:HIGS based PII^2RC existence}
	Consider the SISO minimal linear NI system (\ref{eq:G(s)}) with transfer function $G(s)$. Also, consider a HIGS-based $\text{PII}^2$ controller $\mc H_h$ applied in positive feedback with $G(s)+D$, as shown in Fig.~\ref{fig:HIGS-based PII2RC}, where $D<0$ is a scalar. Here, $\mc H_h$ is the parallel cascade of the HIGS $\mc H_1$, the serial cascade of two HIGS $\mc H_2$ and $\mc H_3$, and also a gain $k_p>0$. Each  HIGS is of the form (\ref{eq:HIGS i}), where (\ref{eq:k2=k3}) is also assumed. Then there exists a set of parameters $\{D,k_p,\omega_{h1},\omega_{h2},k_{h1},k_{h2}\}$ such that the closed-loop system shown in Fig.~\ref{fig:HIGS-based PII2RC} is asymptotically stable.
\end{theorem}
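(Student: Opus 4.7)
The plan is to parallel the Lyapunov-plus-LaSalle argument used earlier for the HIGS-based IRC, with three adaptations to accommodate the parallel-cascade structure of $\mc H_h$: sum the storage of the single HIGS $\mc H_1$ (Lemma~\ref{lemma:HIGS NNI}) with that of the cascade $(\mc H_2,\mc H_3)$ (Lemma~\ref{lemma:cascade HIGS NI}); add a $\tfrac{k_p}{2}e^2$ term to handle the proportional branch; and add a $-\tfrac{D}{2}u^2$ term to absorb the loop feedthrough. Since the statement is existential, there is ample freedom in tuning $\{D,k_p,\omega_{h1},\omega_{h2},k_{h1},k_{h2}\}$ to close the argument.

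First, I apply Lemma~\ref{lemma:NI} to $(A,B,C)$ to obtain $Y=Y^T>0$ with $AY+YA^T\le 0$ and $B+AYC^T=0$. These yield $G(0)=CYC^T\ge 0$ and, crucially, the identity $-C^T=Y^{-1}A^{-1}B$, which in the previous theorem allowed cross-terms of the form $-uC\dot x$ to combine with $x^TY^{-1}\dot x$ into $\tfrac12\dot x^T(A^{-T}Y^{-1}+Y^{-1}A^{-1})\dot x\le 0$. The same mechanism will drive the plant part of $\dot W$ here.

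Second, I would propose the candidate Lyapunov function
\begin{equation*}
W = \tfrac12\, x^T Y^{-1} x + \tfrac{1}{2k_{h1}}x_{h1}^2 + \tfrac12 x_{h2}^2 + \tfrac{k_p}{2} e^2 - \tfrac{D}{2} u^2 - u\,y,
\end{equation*}
where $y=Cx$, $u=k_p e + x_{h1}+x_{h3}$, and $e=y+Du$. The $-uy$ term plays the role of the $-x_h y$ coupling in the IRC proof; the $\tfrac{k_p}{2}e^2$ term absorbs the $-k_p e\dot e$ that appears via $\dot x_{h1}+\dot x_{h3}=\dot u-k_p\dot e$; and $-\tfrac{D}{2}u^2$ absorbs the $Du\dot u$ left over after using $e-y=Du$. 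Differentiating and applying $\dot V_1\le e\dot x_{h1}$ and $\dot V_2\le e\dot x_{h3}$ from the two lemmas, the expression telescopes to $\dot W\le\tfrac12\dot x^T(A^{-T}Y^{-1}+Y^{-1}A^{-1})\dot x\le 0$. Positive definiteness of $W$ is then handled by a Schur complement on the $(x,u)$ block, which reduces to the DC-gain condition $D<-G(0)$ (the same condition used for the linear $\text{PII}^2\text{RC}$ earlier in the paper); the remaining squared terms ensure positivity in $x_{h1}, x_{h2}$, and a direct check gives positivity in the isolated $x_{h3}$-direction as well.

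Third, asymptotic stability follows by LaSalle's invariance principle. On $\{\dot W\equiv 0\}$, the plant inequality forces $\dot x\in\ker(A^{-T}Y^{-1}+Y^{-1}A^{-1})$, and each HIGS either operates in its gain mode or saturates its dissipation inequality; by Lemma~\ref{lemma:HIGS lossless condition} and the corresponding statement in Lemma~\ref{lemma:cascade HIGS NI}, saturation collapses the HIGS to the gain-mode relations, in which the closed loop is linear and the Schur condition together with observability of $(A,C)$ forces the state to the origin. The principal obstacle is the bookkeeping of the eight mode combinations of the three HIGS elements on the invariant set; as in the HIGS-based IRC proof, this is resolved by choosing $\omega_{h1},\omega_{h2}$ sufficiently small so that the sector-boundary condition $\omega_{hi}e_i^2>k_{hi}e_i\dot e_i$ can be violated along any candidate invariant trajectory, ruling out persistent integrator-mode orbits and reducing the analysis to the linear gain-mode case already handled.
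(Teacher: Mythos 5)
Your proposal takes essentially the same route as the paper's proof: with $v=x_{h1}+x_{h3}$, $e=\gamma(y+Dv)$ and $u=\gamma(k_py+v)$ where $\gamma=\frac{1}{1-Dk_p}$, your candidate function expands to exactly the quadratic form $\frac12 z^TMz$ used in the paper, the derivative splits into the same three nonpositive pieces via Lemmas \ref{lemma:HIGS NNI} and \ref{lemma:cascade HIGS NI}, and positive definiteness reduces by the same Schur-complement chain to $D<-G(0)$. The LaSalle step likewise mirrors the paper's (collapse to gain mode via Lemmas \ref{lemma:HIGS lossless condition} and \ref{lemma:cascade HIGS NI}, then a small-$\omega_{h1},\omega_{h2}$ tuning to rule out nontrivial invariant trajectories), so the proposal is correct and not materially different.
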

\begin{proof}
According to Fig.~\ref{fig:HIGS-based PII2RC}, we have that
\begin{align}
	e =&\ Cx+Du= Cx+D(x_{h1}+x_{h3}+k_pe)\notag\\
	\implies e =&\ \gamma Cx+\gamma D(x_{h1}+x_{h3}),\label{eq:e in x}
\end{align}
where
\begin{equation}\label{eq:gamma}
	\gamma = \frac{1}{1-Dk_p}>0.
\end{equation}
Also, we have that
\begin{align}
	u =&\ x_{h1}+x_{h3}+k_pe\notag\\
	=&\ k_p\gamma Cx+\gamma (x_{h1}+x_{h3})\label{eq:u in x}
\end{align}

	We apple Lyapunov's direct method using the candidate Lyapunov function
	\begin{equation}\label{eq:W}
		W(x,x_{h1},x_{h2},x_{h3})=\frac{1}{2}\begin{bmatrix}
			x^T&x_{h1}&x_{h2}&x_{h3}
		\end{bmatrix}M \begin{bmatrix}
			x\\x_{h1}\\x_{h2}\\x_{h3}
		\end{bmatrix}
	\end{equation}
where	
	\begin{equation*}
M=\begin{bmatrix}
			Y^{-1}-k_p\gamma C^TC & -\gamma C^T & 0 & -\gamma C^T\\
			-\gamma C & \frac{1}{k_{h1}}-D\gamma & 0 & -D\gamma \\
			0 & 0 & 1 & 0\\
			-\gamma C & -D\gamma & 0 & -D\gamma
		\end{bmatrix}.
	\end{equation*}
First, we show the positive definiteness of the function $W(x,x_{h1},x_{h2},x_{h3})$ by showing $M>0$. Observing the third block row and block column of $M$, we have $M>0$ if and only if
\begin{equation*}
	\widetilde M = \begin{bmatrix}
			Y^{-1}-k_p\gamma C^TC & -\gamma C^T & -\gamma C^T\\
			-\gamma C & \frac{1}{k_{h1}}-D\gamma & -D\gamma \\
			-\gamma C & -D\gamma & -D\gamma
		\end{bmatrix}>0.
\end{equation*}
Since $k_p>0$ and $D<0$, we have that $\gamma>0$. Therefore, $\widetilde M>0$ if and only if
\begin{align}
	\widehat M = \frac{1}{\gamma}\widetilde M=\begin{bmatrix}
			\frac{1}{\gamma} Y^{-1}-k_p C^TC & - C^T & - C^T\\
			- C & \frac{1}{k_{h1}\gamma}-D & -D \\
			- C & -D & -D
		\end{bmatrix}> 0
\end{align}
We apply Schur complement theorem in the following. Observing that the $3,3$ block $\widehat M_{33}=-D>0$, then $\widehat M>0$ if and only if
\begin{align}
	\overline M =& \widehat M/\widehat M_{33}\notag\\
	=&\begin{bmatrix}
			\frac{1}{\gamma} Y^{-1}-k_p C^TC & - C^T\\
			- C & \frac{1}{k_{h1}\gamma}-D \end{bmatrix}+\frac{1}{D}\begin{bmatrix}
				C^T\\ D
			\end{bmatrix}\begin{bmatrix}
				C & D
			\end{bmatrix}\notag\\
			=&\begin{bmatrix}
				\frac{1}{\gamma} Y^{-1}-(k_p-\frac{1}{D}) C^TC & 0 \\ 0 & \frac{1}{k_{h1}\gamma}
			\end{bmatrix}\label{eq:bar M}
\end{align}
Since $\frac{1}{k_{h1}\gamma}>0$, then we have $\overline M>0$ if and only if
\begin{equation}\label{eq:bar M>0 condition}
	\frac{1}{\gamma} Y^{-1}-(k_p-\frac{1}{D}) C^TC>0
\end{equation}
Substituting (\ref{eq:gamma}) into (\ref{eq:bar M>0 condition}), we have that (\ref{eq:bar M>0 condition}) holds if and only if
\begin{equation}\label{eq:bar M>0 condition2}
	Y^{-1}+\frac{1}{D}C^TC>0
\end{equation}
Using Schur complement theorem, (\ref{eq:bar M>0 condition2}) is equivalent to the positive definiteness of the matrix
\begin{equation}\label{eq:Q}
	Q = \begin{bmatrix}
		Y^{-1} & C^T\\ C & -D
	\end{bmatrix}
\end{equation}
Also, considering that $Y>0$, we have $Q>0$ if and only if 
\begin{equation}\label{eq:-D-CYC^T>0}
	-D-CYC^T>0.
\end{equation}
Considering that $G(0)=-CA^{-1}B=CYC^T$ according to (\ref{eq:NI lemma}), the condition (\ref{eq:-D-CYC^T>0}) can be expressed as
\begin{equation}\label{eq:D<-G(0)}
	D<-G(0).
\end{equation}
Therefore, the function $W(x,x_{h1},x_{h2},x_{h3})$ given in (\ref{eq:W}) is positive definite if and only if (\ref{eq:D<-G(0)}) is satisfied. Taking the time derivative of $W(x,x_{h1},x_{h2},x_{h3})$, we have
\begin{align}
	\dot W&(x,x_{h1},x_{h2},x_{h3})\notag\\
	=&\ x^TY^{-1}\dot x+\frac{1}{k_{h1}}x_{h1}\dot x_{h1}+x_{h2}\dot x_{h2}-\gamma (\dot x_{h1}+\dot x_{h3})Cx\notag\\
	&-\gamma (x_{h1}+x_{h3})C\dot x-\gamma D(x_{h1}+x_{h3})(\dot x_{h1}+\dot x_{h3})\notag\\
	&-k_p\gamma x^TC^TC\dot x\notag\\
	=&\left[x^TY^{-1}-\left(\gamma (x_{h1}+x_{h3}) -k_p\gamma x^TC^T\right) \right]C\dot x\notag\\
	&+\left[\frac{1}{k_{h1}}x_{h1}\dot x_{h1}-(\gamma Cx+\gamma D(x_{h1}+x_{h3}))\dot x_{h1}\right]\notag\\
	&+\left[x_{h2}\dot x_{h2}-(\gamma Cx+\gamma D(x_{h1}+x_{h3}))\dot x_{h3}\right]\notag\\
	=&\left(x^TY^{-1}-uC\right)\dot x+\left(\dot V_1(x_{h1})-e\dot x_{h1}\right)\notag\\
	&+ \left(\dot V_2(x_{h2},x_{h3})-e\dot x_{h3}\right).
\end{align}
We have that
\begin{align}
	\left(x^TY^{-1}-uC\right)\dot x=& \left(x^TA^TA^{-T}Y^{-1}+uB^TA^{-T}Y^{-1}\right)\dot x\notag\\
	=&\left(x^TA^T+uB^T\right)A^{-T}Y^{-1}\dot x\notag\\
	=&\ \dot x^T(A^{-T}Y^{-1})\dot x\notag\\
	=&\ \frac{1}{2}\dot x^T(A^{-T}Y^{-1}+Y^{-1}A^{-1})\dot x\notag\\
	\leq &\ 0.
\end{align}
Also, we have that $\dot V_1(x_{h1})-e\dot x_{h1}\leq 0$ and $\dot V_2(x_{h2},x_{h3})-e\dot x_{h3}\leq 0$ according to the NI property of the $\mc H_1$ and the cascade of $\mc H_2$ and $\mc H_3$, as shown in Lemmas \ref{lemma:HIGS NNI} and \ref{lemma:cascade HIGS NI}. Therefore, $\dot W(x,x_{h1},x_{h2},x_{h3})\leq 0$, which implies that the closed-loop interconnection in Fig.~\ref{fig:HIGS-based PII2RC} is Lyapunov stable. We apply LaSalle's invariance principle in the following to show asymptotic stability. In the case that $\dot W(x,x_{h1},x_{h2},x_{h3})$ remains zero, we have that $\dot x^T(A^{-T}Y^{-1}+Y^{-1}A^{-1})\dot x\equiv 0$, $\dot V_1(x_{h1})-e\dot x_{h1}\equiv 0$ and $\dot V_2(x_{h2},x_{h3})-e\dot x_{h3}\equiv 0$. According to Lemma \ref{lemma:HIGS lossless condition}, $\dot V_1(x_{h1})-e\dot x_{h1}\equiv 0$ implies 
\begin{equation}\label{eq:x1=k1e}
	x_{h1}\equiv k_{h1}e.
\end{equation}
Also, according to Lemma \ref{lemma:cascade HIGS NI}, $\dot V_2(x_{h2},x_{h3})-e\dot x_{h3}\equiv 0$ implies
\begin{align}
	x_{h2}\equiv &\ k_{h2}e,\label{eq:x2=k2e}\\
	x_{h3}\equiv &\ k_{h2}^2e.\label{eq:x3=k3e}
\end{align}
We show that the HIGS $\mc H_1$, $\mc H_2$ and $\mc H_3$ cannot stay in the integrator mode $\mc F_{i1}$ by contradiction. According to (\ref{eq:HIGS i}), if $(e_i,x_{hi},\dot{e_i}) \in \mathcal{F}_{i1}$, then $\dot x_{hi}=\omega_{hi}e_i$. Since $x_{hi}=k_{hi}e_i$, then we have $k_{hi}\dot e_i=\omega_{hi}e_i$. That is,
\begin{equation*}
	\dot e_i=\frac{\omega_{hi}}{k_{hi}}e_i,
\end{equation*}
which implies that $e_i$ diverges and so is $x_{hi}$. This contradicts the Lyapunov stability of the interconnection that is proved above. Therefore, the HIGS $\mc H_1$, $\mc H_2$ and $\mc H_3$ all stay in the gain mode $\mc F_{i2}$. In this case, according to (\ref{eq:Fi2}), we have that $\omega_{hi}e_i^2>k_{hi}e_i\dot e_i$. That is
\begin{align}
	\omega_{h1}e^2> k_{h1}e\dot e;\notag\\
	\omega_{h2}e^2> k_{h2}e\dot e;\notag\\
	\omega_{h2}x_{h2}^2> k_{h2}x_{h2}\dot x_{h2}\implies  \omega_{h2}e^2> k_{h2}e\dot e,\notag
\end{align}
for $\mc H_1$, $\mc H_2$ and $\mc H_3$, respectively. Hence, we have
\begin{equation}\label{eq:rho}
	\rho e^2> e\dot e,
\end{equation}
where $\rho = \min\{\frac{\omega_{h1}}{k_{h1}}, \frac{\omega_{h2}}{k_{h2}}\}$. We show in the following that the condition (\ref{eq:rho}) can be satisfied over time by satisfying $e\dot e<0$. In this case that $e\dot e<0$, the HIGS input $e$ converges. This implies that $x_{h1}$, $x_{h2}$ and $x_{h3}$ all converge to zero. Also, according to (\ref{eq:e in x}) and (\ref{eq:u in x}), $y$ and $u$ also converge. This is not the case of $\dot W(x,x_{h1},x_{h2},x_{h3})\equiv 0$ that is considered here. Also, we can avoid the case that (\ref{eq:rho}) is satisfied by satisfying $\dot e\equiv 0$ overtime. When $e$ is a constant, the HIGS states $x_{h1}$, $x_{h2}$ and $x_{h3}$ are all constants, according to (\ref{eq:x1=k1e}), (\ref{eq:x2=k2e}) and (\ref{eq:x3=k3e}). Also, according to (\ref{eq:e in x}) and (\ref{eq:u in x}), we have that $y=Cx$ is a constant and also $u$ is a constant. Since the system (\ref{eq:G(s)}) is observable, we have that $\dot x=0$ and the system (\ref{eq:G(s)}) is in a steady state. Denote the constant values of $u$ and $e$ by $\overline u$ and $\overline e$, respectively, we have that
\begin{equation*}
	\overline u = (k_{h1}+k_{h2}^2+k_p)\overline e,
\end{equation*}
according to the system setting in Fig.~\ref{fig:HIGS-based PII2RC}. Also, we have that 
\begin{equation*}
	\overline e = (G(0)+D)\overline u.
\end{equation*}
Therefore, by choosing suitable parameters $k_{h1}$, $k_{h2}$, and $k_p$ such that
\begin{equation*}
	k_{h1}+k_{h2}^2+k_p\neq \frac{1}{G(0)+D},
\end{equation*}
we can avoid the case that $\dot e\equiv 0$. Hence, $e\dot e>0$ will be satisfied eventually. In this case, since the trajectories of $e$ and $\dot e$ are independent of $\omega_{h1}$ and $\omega_{h2}$ when all of the HIGS are in the gain mode $\mc F_{2i}$, then we can always choose $\omega_{h1}$ or $\omega_{h2}$ to be sufficiently small such that $\rho e^2<e\dot e$. Therefore, $\dot W(x,x_{h1},x_{h2},x_{h3})$ cannot remain zero over time and $W(x,x_{h1},x_{h2},x_{h3})$ will keep decreasing until $W(x,x_{h1},x_{h2},x_{h3})=0$. This implies that the interconnection in Fig.~\ref{fig:HIGS-based PII2RC} is asymptotically stable.
\end{proof}

\section{Example}\label{sec:example}
In this section, we apply the proposed HIGS-based IRC to stabilize a mass-spring system. As is shown in Fig.~\ref{fig:example}, the mass of the cart is $m=1kg$ and the spring constant is $k=1N/m$. The state-space model of the system is given as follows:
\begin{equation}\label{eq:example}
	\begin{aligned}
		\dot x =& \begin{bmatrix}
			0&1\\-1&0
		\end{bmatrix}x+\begin{bmatrix}
			0\\1
		\end{bmatrix}u;\\
		y=& \begin{bmatrix}
			1&0
		\end{bmatrix}x,
	\end{aligned}
\end{equation}
where $x=\begin{bmatrix}
	x_1\\x_2
\end{bmatrix}$ is the system state with $x_1$ and $x_2$ being its displacement and velocity, respectively. Also, $u$ is an external force input and we measure the system displacement as its output $y$.
\begin{figure}[h!]
\centering
\ctikzset{bipoles/length=1cm}
\begin{circuitikz}
	\pattern[pattern = north east lines] (-0.2,-0.2) rectangle (0,1.5);
	\draw[thick] (0,0) -- (0,1.5);
	\pattern[pattern = north east lines] (0,-0.2) rectangle (5,0);
	\draw[thick] (0,0) -- (5,0);	
	\draw[thick] (0,0) -- (0,1.5);
	\draw (0,0.75) to [spring, l=${k=1N/m}$] (2,0.75);
	\draw[fill=gray!40] (2,0) rectangle (4,1.5);
\node at (3,0.75){$m=1kg$};
\draw[thick, ->] (4,0.75) -- (5,0.75);
        \node at (5.3,0.75){$u$};
 \draw[thick, |->] (3,2) -- (4,2);
        \node at (4.3,2){$y$};
\end{circuitikz}
\caption{A mass-spring system with mass $m = 1kg$ and spring constant $k=1N/m$.}
\label{fig:example}
\end{figure}
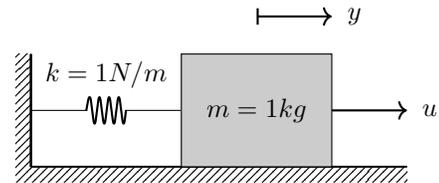
The system (\ref{eq:example}) has a transfer function $G(s)=\frac{1}{s^2+1}$.  We construct a HIGS-based IRC of the form (\ref{eq:HIGS-based IRC}) with
\begin{equation}\label{eq:HIRC parameters}
	\omega_h=0.5,\ k_h=20,\ D=-1.
\end{equation}
Using (\ref{eq:kappa}), we have that $\widetilde \kappa = \frac{5}{6}$. Such a $\widetilde \kappa$ satisfies the condition $\widetilde \kappa G(0)<1$. We set the initial state of the system (\ref{eq:example}) be $x_1(0) = 3$, $x_2(0)=1$. The initial state of the HIGS-based IRC is set to be zero. We can see from Fig.~\ref{fig:simulation} that the states of the system (\ref{eq:example}) converge to the origin under the effect of the HIGS-based IRC.

\begin{figure}[h!]
\centering
\psfrag{Time}{\hspace{-0.25cm}Time ($s$)}
\psfrag{xa}{\scriptsize$x_1$}
\psfrag{xb}{\scriptsize$x_2$}
\psfrag{xho}{\scriptsize$x_h$}
\psfrag{State}{State}
\psfrag{State Trajectories}{\hspace{-0.2cm}State Trajectories}
\psfrag{5}{\scriptsize$5$}
\psfrag{4}{\scriptsize$4$}
\psfrag{2}{\scriptsize$2$}
\psfrag{-2}{\hspace{-0.15cm}\scriptsize$-2$}
\psfrag{0}{\scriptsize$0$}
\psfrag{10}{\scriptsize$10$}
\psfrag{15}{\scriptsize$15$}
\psfrag{20}{\scriptsize$20$}
\psfrag{25}{\scriptsize$25$}
\psfrag{30}{\scriptsize$30$}
\psfrag{35}{\scriptsize$35$}
\psfrag{40}{\scriptsize$40$}
\includegraphics[width=9cm]{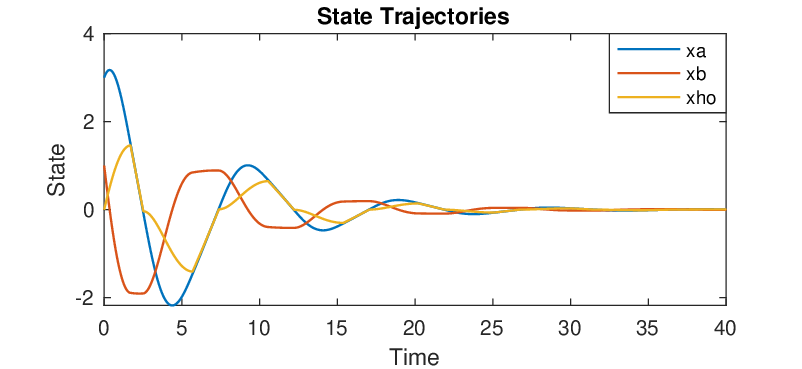}
\caption{State trajectories for the feedback interconnection of the system (\ref{eq:example}) and a HIGS-based IRC with parameters (\ref{eq:HIRC parameters}).}
\label{fig:simulation}
\end{figure}

\section{Conclusion}\label{sec:conclusion}
In this paper, we introduce a HIGS-based IRC to provide a control approach for an NI system, with the advantages of both HIGS and IRC utilized. A HIGS-based IRC is achieved by replacing the integrator in an IRC by a HIGS element. We show that a HIGS-based IRC is an NI system and can stabilize an NI plant when applied in positive feedback. Also, we propose a $\text{PII}^2\text{RC}$ and a HIGS-based $\text{PII}^2\text{RC}$ for the control of NI systems. We show that both a $\text{PII}^2\text{RC}$ and a HIGS-based $\text{PII}^2\text{RC}$ can provide asymptotic stabilization for an NI system. An illustrative example is also provided.

\bibliographystyle{IEEEtran}

\begin{thebibliography}{10}
\providecommand{\url}[1]{#1}
\csname url@samestyle\endcsname
\providecommand{\newblock}{\relax}
\providecommand{\bibinfo}[2]{#2}
\providecommand{\BIBentrySTDinterwordspacing}{\spaceskip=0pt\relax}
\providecommand{\BIBentryALTinterwordstretchfactor}{4}
\providecommand{\BIBentryALTinterwordspacing}{\spaceskip=\fontdimen2\font plus
\BIBentryALTinterwordstretchfactor\fontdimen3\font minus \fontdimen4\font\relax}
\providecommand{\BIBforeignlanguage}[2]{{%
\expandafter\ifx\csname l@#1\endcsname\relax
\typeout{** WARNING: IEEEtran.bst: No hyphenation pattern has been}%
\typeout{** loaded for the language `#1'. Using the pattern for}%
\typeout{** the default language instead.}%
\else
\language=\csname l@#1\endcsname
\fi
#2}}
\providecommand{\BIBdecl}{\relax}
\BIBdecl

\bibitem{lanzon2008stability}
A.~Lanzon and I.~R. Petersen, ``Stability robustness of a feedback interconnection of systems with negative imaginary frequency response,'' \emph{IEEE Transactions on Automatic Control}, vol.~53, no.~4, pp. 1042--1046, 2008.

\bibitem{petersen2010feedback}
I.~R. Petersen and A.~Lanzon, ``Feedback control of negative-imaginary systems,'' \emph{IEEE Control Systems Magazine}, vol.~30, no.~5, pp. 54--72, 2010.

\bibitem{preumont2018vibration}
A.~Preumont, \emph{Vibration control of active structures: an introduction}.\hskip 1em plus 0.5em minus 0.4em\relax Springer, 2018, vol. 246.

\bibitem{halim2001spatial}
D.~Halim and S.~O.~R. Moheimani, ``Spatial resonant control of flexible structures-application to a piezoelectric laminate beam,'' \emph{IEEE Transactions on Control Systems Technology}, vol.~9, no.~1, pp. 37--53, 2001.

\bibitem{pota2002resonant}
H.~Pota, S.~O.~R. Moheimani, and M.~Smith, ``Resonant controllers for smart structures,'' \emph{Smart Materials and Structures}, vol.~11, no.~1, p.~1, 2002.

\bibitem{brogliato2007dissipative}
B.~Brogliato, R.~Lozano, B.~Maschke, and O.~Egeland, \emph{Dissipative systems analysis and control: theory and applications}.\hskip 1em plus 0.5em minus 0.4em\relax Springer, London, 2007, vol.~2.

\bibitem{shi2024necessary}
K.~Shi, I.~R. Petersen, and I.~G. Vladimirov, ``Necessary and sufficient conditions for state feedback equivalence to negative imaginary systems,'' \emph{IEEE Transactions on Automatic Control (Early Acess)}, 2024.

\bibitem{lanzon2017feedback}
A.~Lanzon and H.-J. Chen, ``Feedback stability of negative imaginary systems,'' \emph{IEEE Transactions on Automatic Control}, vol.~62, no.~11, pp. 5620--5633, 2017.

\bibitem{mabrok2013spectral}
M.~A. Mabrok, A.~G. Kallapur, I.~R. Petersen, and A.~Lanzon, ``Spectral conditions for negative imaginary systems with applications to nanopositioning,'' \emph{IEEE/ASME Transactions on Mechatronics}, vol.~19, no.~3, pp. 895--903, 2013.

\bibitem{das2014mimo}
S.~K. Das, H.~R. Pota, and I.~R. Petersen, ``A {MIMO} double resonant controller design for nanopositioners,'' \emph{IEEE Transactions on Nanotechnology}, vol.~14, no.~2, pp. 224--237, 2014.

\bibitem{das2014resonant}
------, ``Resonant controller design for a piezoelectric tube scanner: A mixed negative-imaginary and small-gain approach,'' \emph{IEEE Transactions on Control Systems Technology}, vol.~22, no.~5, pp. 1899--1906, 2014.

\bibitem{das2015multivariable}
------, ``Multivariable negative-imaginary controller design for damping and cross coupling reduction of nanopositioners: a reference model matching approach,'' \emph{IEEE/ASME Transactions on Mechatronics}, vol.~20, no.~6, pp. 3123--3134, 2015.

\bibitem{cai2010stability}
C.~Cai and G.~Hagen, ``Stability analysis for a string of coupled stable subsystems with negative imaginary frequency response,'' \emph{IEEE Transactions on Automatic Control}, vol.~55, no.~8, pp. 1958--1963, 2010.

\bibitem{rahman2015design}
M.~A. Rahman, A.~Al~Mamun, K.~Yao, and S.~K. Das, ``Design and implementation of feedback resonance compensator in hard disk drive servo system: A mixed passivity, negative-imaginary and small-gain approach in discrete time,'' \emph{Journal of Control, Automation and Electrical Systems}, vol.~26, no.~4, pp. 390--402, 2015.

\bibitem{bhikkaji2011negative}
B.~Bhikkaji, S.~O.~R. Moheimani, and I.~R. Petersen, ``A negative imaginary approach to modeling and control of a collocated structure,'' \emph{IEEE/ASME Transactions on Mechatronics}, vol.~17, no.~4, pp. 717--727, 2011.

\bibitem{chen2023nonlinear}
Y.~Chen, K.~Shi, I.~R. Petersen, and E.~L. Ratnam, ``A nonlinear negative imaginary systems framework with actuator saturation for control of electrical power systems,'' \emph{To appear in 2024 European Control Conference}, 2023.

\bibitem{ghallab2018extending}
A.~G. Ghallab, M.~A. Mabrok, and I.~R. Petersen, ``Extending negative imaginary systems theory to nonlinear systems,'' in \emph{2018 IEEE Conference on Decision and Control (CDC)}.\hskip 1em plus 0.5em minus 0.4em\relax IEEE, 2018, pp. 2348--2353.

\bibitem{shi2021robust}
K.~Shi, I.~G. Vladimirov, and I.~R. Petersen, ``Robust output feedback consensus for networked identical nonlinear negative-imaginary systems,'' \emph{IFAC-PapersOnLine}, vol.~54, no.~9, pp. 239--244, 2021.

\bibitem{shi2023output}
K.~Shi, I.~R. Petersen, and I.~G. Vladimirov, ``Output feedback consensus for networked heterogeneous nonlinear negative-imaginary systems with free-body motion,'' \emph{IEEE Transactions on Automatic Control}, vol.~68, no.~9, pp. 5536--5543, 2023.

\bibitem{deenen2017hybrid}
D.~A. Deenen, M.~F. Heertjes, W.~Heemels, and H.~Nijmeijer, ``Hybrid integrator design for enhanced tracking in motion control,'' in \emph{2017 American Control Conference (ACC)}.\hskip 1em plus 0.5em minus 0.4em\relax IEEE, 2017, pp. 2863--2868.

\bibitem{middleton1991trade}
R.~H. Middleton, ``Trade-offs in linear control system design,'' \emph{Automatica}, vol.~27, no.~2, pp. 281--292, 1991.

\bibitem{van2020hybrid}
S.~Van~den Eijnden, M.~F. Heertjes, W.~Heemels, and H.~Nijmeijer, ``Hybrid integrator-gain systems: A remedy for overshoot limitations in linear control?'' \emph{IEEE Control Systems Letters}, vol.~4, no.~4, pp. 1042--1047, 2020.

\bibitem{van2021overcoming}
D.~Van~Dinther, B.~Sharif, S.~Van~den Eijnden, H.~Nijmeijer, M.~F. Heertjes, and W.~Heemels, ``Overcoming performance limitations of linear control with hybrid integrator-gain systems,'' \emph{IFAC-PapersOnLine}, vol.~54, no.~5, pp. 289--294, 2021.

\bibitem{heertjes2023overview}
M.~Heertjes, S.~van Den~Eijnden, and B.~Sharif, ``An overview on hybrid integrator-gain systems with applications to wafer scanners,'' in \emph{2023 IEEE International Conference on Mechatronics (ICM)}.\hskip 1em plus 0.5em minus 0.4em\relax IEEE, 2023, pp. 1--8.

\bibitem{deenen2021projection}
D.~A. Deenen, B.~Sharif, S.~van~den Eijnden, H.~Nijmeijer, M.~Heemels, and M.~Heertjes, ``Projection-based integrators for improved motion control: Formalization, well-posedness and stability of hybrid integrator-gain systems,'' \emph{Automatica}, vol. 133, p. 109830, 2021.

\bibitem{van2023small}
S.~van~den Eijnden, M.~Heertjes, H.~Nijmeijer, and W.~Heemels, ``A small-gain approach to incremental input-to-state stability analysis of hybrid integrator-gain systems,'' \emph{IEEE Control Systems Letters}, 2023.

\bibitem{shi2022negative}
K.~Shi, N.~Nikooienejad, I.~R. Petersen, and S.~O.~R. Moheimani, ``A negative imaginary approach to hybrid integrator-gain system control,'' in \emph{2022 IEEE 61st Conference on Decision and Control (CDC)}.\hskip 1em plus 0.5em minus 0.4em\relax IEEE, 2022, pp. 1968--1973.

\bibitem{shi2023MEMS}
------, ``Negative imaginary control using hybrid integrator-gain systems: Application to {MEMS} nanopositioner,'' \emph{IEEE Transactions on Control Systems Technology (Early Access)}, 2023.

\bibitem{shi2023nonlinear}
K.~Shi, I.~R. Petersen, and I.~G. Vladimirov, ``Nonlinear negative imaginary systems with switching,'' \emph{IFAC-PapersOnLine}, vol.~56, no.~2, pp. 3936--3941, 2023.

\bibitem{shi2023discrete}
------, ``Discrete-time negative imaginary systems from {ZOH} sampling,'' \emph{To appear in the proceedings of the 26th International Symposium on Mathematical Theory of Networks and Systems (MTNS), arXiv preprint:2312.05419}, 2024.

\bibitem{shi2024digital}
K.~Shi and I.~R. Petersen, ``Digital control of negative imaginary systems: a discrete-time hybrid integrator-gain system approach,'' \emph{To appear in 2024 European Control Conference}, 2024.

\bibitem{aphale2007integral}
S.~S. Aphale, A.~J. Fleming, and S.~R. Moheimani, ``Integral resonant control of collocated smart structures,'' \emph{Smart materials and structures}, vol.~16, no.~2, p. 439, 2007.

\bibitem{bhikkaji2008multivariable}
B.~Bhikkaji, S.~R. Moheimani, and I.~R. Petersen, ``Multivariable integral control of resonant structures,'' in \emph{2008 47th IEEE Conference on Decision and Control}.\hskip 1em plus 0.5em minus 0.4em\relax IEEE, 2008, pp. 3743--3748.

\bibitem{yue2015integral}
Y.~Yue and Z.~Song, ``An integral resonant control scheme for a laser beam stabilization system,'' in \emph{2015 IEEE International Conference on Information and Automation}.\hskip 1em plus 0.5em minus 0.4em\relax IEEE, 2015, pp. 2221--2226.

\bibitem{bhikkaji2008integral}
B.~Bhikkaji and S.~R. Moheimani, ``Integral resonant control of a piezoelectric tube actuator for fast nanoscale positioning,'' \emph{IEEE/ASME Transactions on mechatronics}, vol.~13, no.~5, pp. 530--537, 2008.

\bibitem{russell2017evaluating}
D.~Russell and S.~S. Aphale, ``Evaluating the performance of robust controllers for a nanopositioning platform under loading.'' \emph{IFAC-PapersOnLine}, vol.~50, no.~1, pp. 10\,895--10\,900, 2017.

\bibitem{xiong2010negative}
J.~Xiong, I.~R. Petersen, and A.~Lanzon, ``A negative imaginary lemma and the stability of interconnections of linear negative imaginary systems,'' \emph{IEEE Transactions on Automatic Control}, vol.~55, no.~10, pp. 2342--2347, 2010.

\bibitem{Achten_HIGS_Skyhook_thesis_2020}
A.~S.~P., ``{HIGS}-based skyhook damping design of a multivariable vibration isolation system,'' Master's thesis, Eindhoven University of Technology, 2020.

\end{thebibliography}

\end{document}